\DeclarePairedDelimiter{\ceil}{\lceil}{\rceil}
\DeclarePairedDelimiter\floor{\lfloor}{\rfloor}
\DeclareMathAlphabet\mathbfcal{OMS}{cmsy}{b}{n}
\theoremstyle{definition}
\newtheorem{Theorem}{Theorem}
\newtheorem{Proposition}{Proposition}
\newtheorem{Example}{Example}
\newtheorem{Remark}{Remark}
\newtheorem{Definition}{Definition}
\begin{document}
%
% paper title
% Titles are generally capitalized except for words such as a, an, and, as,
% at, but, by, for, in, nor, of, on, or, the, to and up, which are usually
% not capitalized unless they are the first or last word of the title.
% Linebreaks \\ can be used within to get better formatting as desired.
% Do not put math or special symbols in the title.
% make the title area

%---------------------------------------------

%\begin{titlepage}%
% paper title
% Titles are generally capitalized except for words such as a, an, and, as,
% at, but, by, for, in, nor, of, on, or, the, to and up, which are usually
% not capitalized unless they are the first or last word of the title.
% Linebreaks \\ can be used within to get better formatting as desired.
% Do not put math or special symbols in the title.
% make the title area

%---------------------------------------------

%\begin{titlepage}
\title{Quantifying the Capacity Gains in Coarsely Quantized SISO Systems with Nonlinear Analog Operators}

% author names and affiliations
% use a multiple column layout for up to three different
% affiliations
\author{

\IEEEauthorblockN{ Farhad Shirani$^\dagger$ \thanks{This work was supported in part by NSF grant CCF-2132843.}, Hamidreza Aghasi$\ddagger$}
\IEEEauthorblockA{$^\dagger$Florida International University, $^\ddagger$ University of California, Irvine,
\\Email: fshirani@fiu.edu, haghasi@uci.edu}
}

\maketitle

 \begin{abstract}
The power consumption of high-speed, high-resolution analog to digital converters (ADCs) is a limiting factor in implementing large-bandwidth mm-wave communication systems. A mitigating solution, which has drawn considerable recent interest, is to use a few low-resolution ADCs at the receiver. While reducing the number and resolution of the ADCs decreases power consumption, it also leads to a reduction in channel capacity due to the information loss induced by coarse quantization.
This implies a rate-energy tradeoff governed by the number and resolution of ADCs. Recently, it was shown that given a fixed number of low-resolution ADCs, 
the application of practically implementable nonlinear analog operators, prior to sampling and quantization, may significantly reduce the aforementioned rate-loss. Building upon these observations, this work focuses on single-input single-output (SISO) communication scenarios, and i) characterizes capacity expressions  under various assumptions on the set of implementable nonlinear analog functions, ii) provides computational methods to calculate the channel capacity numerically, and iii) quantifies the gains due to the use of nonlinear operators in SISO receiver terminals. Furthermore, circuit-level simulations, using a 65 nm Bulk CMOS technology, are provided to show the implementability of the desired nonlinear operators in the analog domain. The power requirements of the proposed circuits are quantified for various analog operators. 
\end{abstract}

% no keywords

% For peer review papers, you can put extra information on the cover
% page as needed:
% \ifCLASSOPTIONpeerreview
% \begin{center} \bfseries EDICS Category: 3-BBND \end{center}
% \fi
%
% For peerreview papers, this IEEEtran command inserts a page break and
% creates the second title. It will be ignored for other modes.
\IEEEpeerreviewmaketitle
\vspace{-.06in}
\section{Introduction}
 In order to satisfy the ever-growing demand for higher data-rates, the fifth generation (5G) of wireless networks operate in a spectrum which includes frequencies above 6 GHz  especially the millimeter wave (mm-wave) bands. This allows for larger channel bandwidths
compared to earlier generation radio frequency (RF) systems which operate in lower frequency bands. The energy consumption of components such as analog to digital converters (ADCs) increases significantly with bandwidth  \cite{BR}.  For instance,
the power consumption of current commercial
high-speed ($\geq$ 20 GSample/s), high-resolution
(e.g. 8-12 bits) ADCs is around 500 mW per ADC \cite{zhang2018low}. In the standard ADC design, the power consumption is proportional to the number of quantization bins and hence grows exponentially in the number of output bits \cite{BR}.  As a result, one method which has been proposed to address high power consumption in mm-wave systems is to use a few {low-resolution ADCs}  at the receiver \cite{heath2016overview,nossek2006capacity,abbasISIT2018,khalili2020throughput,mollen2017achievable,jacobsson2017throughput}. The application of low-resolution ADCs poses fundamental questions in the design of receiver architectures, coding strategies, and capacity analysis of the resulting communication systems.
  
 \begin{figure}[t]
\centering 
\includegraphics[width=0.8\textwidth]{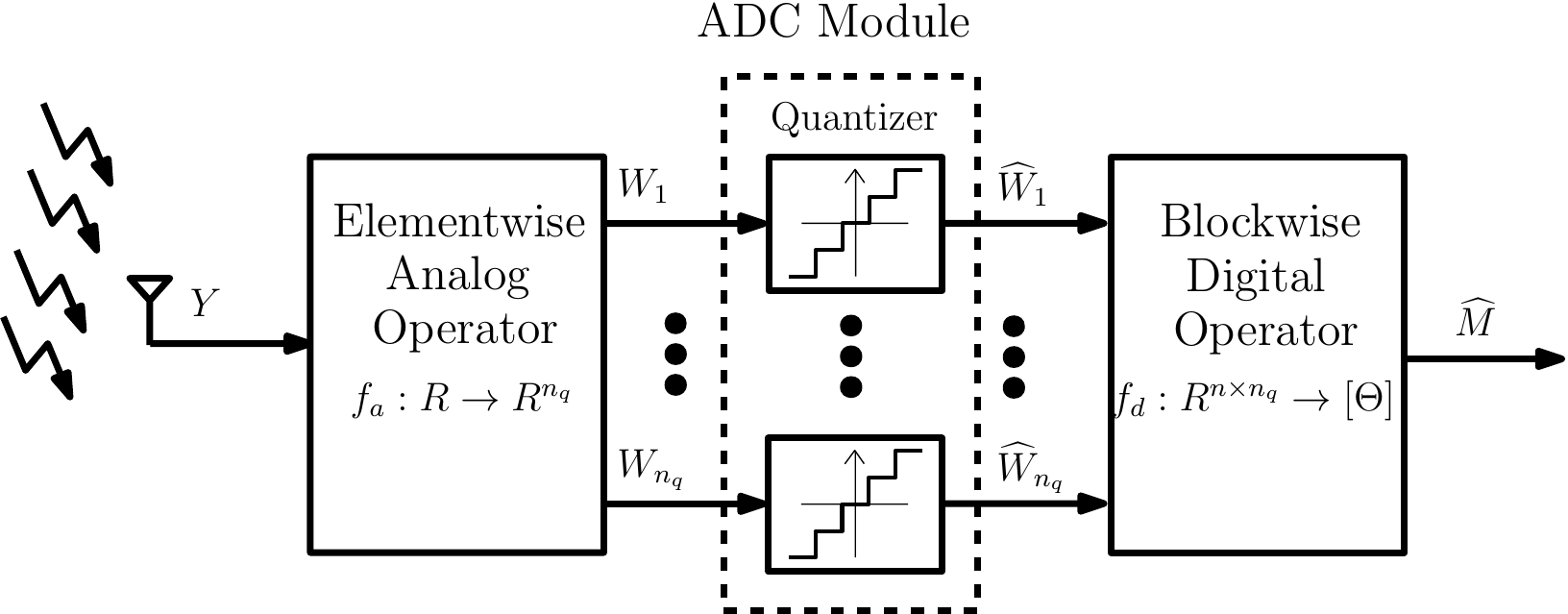}
\caption{The receiver architecture consists of an elementwise analog operator $f_a(\cdot)$, $n_q$ low-resolution ADCs, and a blockwise digital operator $f_d(\cdot)$ with blocklength $n$. $Y$ represents the received signal, $\widehat{M}$ is the message reconstruciton, and $[\Theta]$ is the message set.
}
\vspace{-.15in}
\label{fig:receiver}
\end{figure}

This work focuses on the receiver architectures and set of achievable rates in single-input single-output (SISO) systems equipped with low resolution ADCs. The setup has been considered extensively in prior works. An important initial result was due to the elegant approach proposed by Witsenahusen \cite{witsenhausen1980some}, which implies that, under peak power constraints, the capacity of a SISO system with one $K$-bit ADC   is achieved by a discrete input distribution with at most $K+1$ mass points. Later, this extended to SISO scenarios with average input power constraints \cite{singh2009limits}. For multiple-input multiple-output (MIMO) systems with low-resolution ADCs under peak power constraints, it was shown that the optimal input distribution has a finite discrete alphabet  \cite{dytso2018discrete}. Similarly, in multiterminal communications, for multiple-access channels (MAC) with a single antenna at each terminal and a single one-bit ADC at the receiver, input cardinality bounds were derived under peak power constraints \cite{rassouli2018gaussian}. It should be noted that although in these scenarios the input distribution has a discrete and finite alphabet with known cardinality, the optimization in the channel capacity expression is complex since each choice of input mass points and quantization thresholds yields a different set of channel transition probabilities. As a result, deriving analytical expressions for the channel capacity is challenging and often computational methods are proposed to evaluate the capacity, e.g. the cutting-plane algorithm \cite{huang2005characterization,singh2009limits}.

Recently, we considered MIMO communication systems with one-bit ADCs, and 
showed that the use of nonlinear analog operators, whose output is a polynomial function of their input, prior to sampling and quantization at the ADCs may significantly reduce the rate-loss due to coarse quantization
%; and derived capacity expressions under various assumptions on the channel signal-to-noise ratio and the set of implementable analog operators
\cite{shirani2022MIMO}. The receiver setup is shown in Fig. \ref{fig:receiver}. Furthermore, we introduced an analog circuit design  which produces a quadratic function of its input signal. The underlying idea in the circuit design is to leverage the nonlinearities of analog components to produce harmonics of the input signal, which are then extracted via frequency filtering techniques. 
It should be noted that the circuit complexity and power consumption increases with the degree of the desired polynomial. As a result, there are practical constraints on the degree of polynomials which are implementable under a given power budget. In this work, we build upon the observations in \cite{shirani2022MIMO} and investigate transceiver design and the resulting channel capacity in SISO systems. The main contributions of this work are summarized below:
\begin{itemize}[leftmargin=*]
    \item To characterize the high SNR SISO channel capacity as a function of the number of ADCs, $n_q$, number of output levels of each ADC, $\ell$, and maximum polynomial degree which is implementable using analog circuits, $\delta$. 
    \item To provide computational methods for finding the channel capacity and quantifying the gains due to nonlinear analog processing, and to provide explanations of how these gains change as  SNR, $n_q$, $\ell$, and $\delta$ are changed. 
    \item To provide circuit designs and associated performance simulations for implementing polynomials of degree up to four; and to evaluate their power consumption.
\end{itemize}

It should be noted that for MIMO scenarios with one-bit ADCs closed-form capacity expressions are derived in \cite{shirani2022MIMO} in terms of single-letter information measures, . However, these expressions involve optimization steps which may not be computable in general, or have high computational complexity. In contrast, in this work, we consider general low resolution ADCs --- as opposed to one-bit ADCs --- and provide computational methods to quantify the resulting channel capacity. This is an important step towards quantifying the gains due to nonlinear analog processing compared to beamforming architectures studied in prior works which use linear analog processing, and characterizing the tradeoffs between circuit complexity, power consumption, and channel capacity.

{\em Notation:}
%The random variable $\mathbbm{1}_{\mathcal{E}}$ is the indicator function of the event $\mathcal{E}$.
 The set $\{1,2,\cdots, n\}, n\in \mathbb{N}$ is represented by $[n]$. 
The  vector $(x_1,x_2,\hdots, x_n)$ is written as $x(1\!\!:\!\!n)$ and $x^n$, interchangeably. Similarly,  we interchange $x(i)$ and $x_i$. The vector $(x_k,x_{k+1},\cdots,x_n)$ is denoted by $x(k:n)$. We write $||\cdot||_2$ to denote the $L_2$-norm. An $n\times m$ matrix is written as $h(1\!\!:\!\!n,1\!\!:\!\!m)=[h_{i,j}]_{i,j\in [n]\times [m]}$, its $j$th column is $h(:,j), j\in [m]$, and its $i$th row is $h(i,:), i\in [n]$. We write $\mathbf{x}$ and $\mathbf{h}$ instead of $x(1\!\!:\!\!n)$ and $h(1\!\!:\!\!n,1\!\!:\!\!m)$, respectively, when the dimension is clear from context. Sets are denoted by calligraphic letters such as $\mathcal{X}$, families of sets by sans-serif letters such as $\mathsf{X}$, and collections of families of sets by $\mathscr{X}$. $\Phi$ represent the empty set. For the set $\mathcal{A}\subset \mathbb{R}^n$, the set $\partial\mathcal{A}_k$ denotes its boundary. $\mathbb{B}$ denotes the Borel $\sigma$-field. For the event $\mathcal{E}$, the variable $\mathbbm{1}(\mathcal{E})$ denotes the indicator of the event.

\section{System Model}
\label{sec:form}
We consider a SISO channel, whose input and output $(X, Y)\in \mathbb{R}^2$ are related through $
{Y}={h}{X}+{N}$, where  ${N}\in \mathbb{R}$ is a  Gaussian variable with unit variance and zero mean, and ${h}\in \mathbb{R}$ is the (fixed) channel gain coefficient. We assume that the transmitter and receiver have prefect knowledge of ${h}$, and the channel input has average power constraint $P$, i.e. $\mathbb{E}(X^2)\leq P$. Let the message $M$ be chosen uniformly from $[\Theta]$, where $\Theta \in \mathbb{N}$. The communication blocklength is $n\in \mathbb{N}$ and  the communication rate is $\frac{1}{n}\log{\Theta}$. The transmitter produces $e(M)=X^n$, where $e:[\Theta]\to \mathbb{R}^{n}$ is the encoding function. At the $i$th channel-use, the input $X(i), i\in [n]$ is transmitted and the receiver receives $Y(i)={h}X(i)+N(i)$. The receiver produces the message reconstruction $\widehat{M}=d(Y^n)$, where $d: \mathbb{R}^{n}\to[\Theta]$ is the decoding function. 

The choice of the decoding function $d(\cdot)$ is restricted by the limitations on the number of low-resolution threshold ADCs, $n_q\in \mathbb{N}$, the number of output levels of the ADCs, $\ell\in \mathbb{N}$, and the set of \textit{implementable nonlinear analog functions}: \[\mathcal{F}_a=\{f(x)=\sum_{i=0}^\delta{a_i}x^i, x\in \mathbb{R}| a_i \in \mathbb{R}, i=0,1,\cdots,\delta\},\]
\noindent 
which consists of all polynomials of degree at most $\delta\in \mathbb{N}$. The restriction to low-degree polynomial functions is due to limitations in analog circuit design, and the implementability of such functions is justified by the circuit designs and simulations provided in Section \ref{sec:cir}.

The receiver architecture, shown in Figure \ref{fig:receiver}, consists of:
\\i) A set of elementwise analog processing functions $f_{a,j}\in \mathcal{F}_a, j\in [n_q]$ operating on channel output $Y$ and producing the vector $W(1:n_q)$, where $W(j)=f_{a,j}(Y), j\in [n_q]$.
\\ii) A set of $n_q$ ADCs, each with $\ell$ output levels and  threshold vectors $t(j,1:\ell-1)\in \mathbb{R}^{\ell-1}, j\in [n_q]$ operating on the vector $W(1:n_q), i\in [n]$ and producing  $\widehat{W}(1:n_q)$, where
\[\widehat{W}(j)=
k  \quad \text{ if } \quad W(j)\in [t(j,k),t(j,k+1)], k\in [0,\ell-1],\]
where $j\in [n_q]$ 
and we have defined $t(j,0)\triangleq -\infty$ and  $t(j,\ell)\triangleq\infty$. We call $t(1\!\!:\!n_q,1\!\!:\!\ell-1)$ the \textit{threshold matrix}.
\\ iii) A digital processing module represented by $f_d:\{0,1,\cdots,\ell-1\}^{n\times n_q}\to [\Theta]$, operating on the block of ADC outputs after $n$-channel uses $\widehat{W}(1\!\!:\!\!n,1\!\!:\!\!n_q)$. 
%After the $i$th channel-use, the analog processing module processes the received signal $Y(i)$ in the analog domain and produces $W(i,j)=f_{a,j}(Y(i)), i\in [n], j\in [n_q]$. The output $W(i,1:n_q), i\in [n]$ is fed to the ADCs with threshold $t^{\ell-1}_j, j\in [n_q]$ which produce the discretized vector 
 After the $n$th channel-use, the digital processing module produces the message reconstruction $\widehat{M}=f_d(\widehat{W}(1\!:\!n,1\!:\!n_q))$. The communication system is characterized by $(P,{h}, n_q, \delta,\ell)$, and the transmission system by $(n,\Theta,e,f^{n_q}_a,t(1\!:\!n_q,1\!:\!\ell-1),f_d)$, where $f^{n_q}_a=(f_{a,1},f_{a,2},\cdots,f_{a,n_q})$ and $f_{a,j}, j\in [n_q]$ are polynomials with degree at most $\delta$, and $\mathbb{E}(e^2(I))\leq P$ for $I$ distributed uniformly on $[\Theta]
$. Achievability and probability of error are defined in the standard sense. The capacity maximized over all implementable analog functions is denoted by $C_{Q}(P,h,n_q, \delta,\ell)$.

\section{Communication Strategies and Achievable Rates}
\label{sec:results}
In this section, we investigate the SISO channel capacity for a given communication system parametrized by $(P,h,n_q,\delta,\ell)$.
\subsection{Preliminaries}
%One-bit ADCs and Quadratic Polynomials}
\label{sec:sen:III}
Let us consider the scenario where $n_q$ one-bit ADCs are used at the receiver, where $n_q>1$, and the set of implementable analog functions $\mathcal{F}_a$ is restricted to quadratic functions, i.e., $\delta=\ell=2$. In \cite[Th. 4]{shirani2022MIMO}, the high SNR capacity was derived for MIMO systems with one-bit ADCs. The result implies that the high SNR SISO capacity is equal to $1+\log{n_q}$ bit/channel-use, and is strictly greater than $\log{(1+n_q)}$ bit/channel-use, which is the hybrid beamforming capacity where linear analog processing is used.
The proof relies on a geometric argument. To elaborate, it was argued that the number of messages transmitted per channel-use is equal to the number of partition regions of the output space imposed by the ADC quantization process. For one-bit threshold ADCs with linear analog processing, the number of partition regions is equal to $n_q+1$, hence the high SNR capacity is $\log{(1+n_q)}$, whereas when quadratic functions are used for analog processing, the maximum number of partition regions is equal to $2n_q$, yeilding the capacity of $1+\log{n_q}$. The latter statement is proved by counting the maximum number of partition regions imposed on the one-dimensional manifold $\{(Y,Y^2)| Y\in \mathbb{R}\}$ in partitions of $\mathbb{R}^2$ by $n_q$ lines. This geometric argument does not extend naturally if ADCs with more than two output levels and higher degree polynomial functions are used, i.e. $\delta>2$ and $\ell>2$. In the sequel, we provide an alternative proof of \cite[Th. 4]{shirani2022MIMO} for scenarios with $\delta=\ell=2$. We build upon this to derive capacity expressions for $\delta,\ell\in \mathbb{N}$. To this end, we first introduce some useful terminology and preliminary results. 

The quantization process at the receiver is modeled by two sets of functions. The analog processing functions $f_{a,j}(\cdot), j\in [n_q]$ and ADC threshold matrix $t(1:n_q,1:\ell-1)$. 
\begin{Definition}[\textbf{Quantizer}] A quantizer $Q:\mathbb{R}\to [\ell]^{n_q}$ characterized by the tuple $(\ell,\delta,n_q,f_{a}^{n_q}(\cdot), t(1\!:\!\!n_q,1\!:\!\ell\!-\!1))$ is defined as $Q(\cdot)\triangleq (Q_{1}(\cdot),Q_{2}(\cdot),\cdots,Q_{n_q}(\cdot))$, where $Q_{j}(y)\triangleq k$ if $ f_{a,j}(y)\in [t(j,k),t(j,k+1)], j\in [n_q]$, the functions $f_{a,j}(\cdot)$, $j\!\in\! [n_q]$ are polynomials of degree at most $\delta$, and $t(1\!:\!n_q,\!1\!:\!\ell\!-\!1)\!\in \!\mathbb{R}^{n_q\times {(\ell-1)}}$. The associated partition of $Q(\cdot)$ is: 
\begin{align*}
\mathsf{P}=\{\mathcal{P}_{\mathbf{i}}, \mathbf{i}\in [\ell]^{n_q}\}- \Phi, \text{ where } \mathcal{P}_\mathbf{i}= \{y\in\mathbb{R}| Q(y)= \mathbf{i}\}, \mathbf{i}\in [\ell]^{n_q}.
\end{align*}
\label{def:quant}
\end{Definition}
\vspace{-.2in}
 For a quantizer $Q(\cdot)$, we call $y\in \mathbb{R}$ a \textit{point of transition} if the value of $Q(\cdot)$ changes at input $y$, i.e. if it is a point of discontinuity of $Q(\cdot)$. Let $r$ be a point of transition of $Q(\cdot)$. Then, there must exist output vectors $\mathbf{c}\neq \mathbf{c}'$ and $\epsilon>0$ such that $Q(y)=\mathbf{c}, y\in (r-\epsilon,r)$ and $Q(y)=\mathbf{c}', y\in (r,r+\epsilon)$. So, there exists $j\in [n_q]$ and $k\in [\ell-1]$ such that $f_{a,j}(y)<t(j,k), y\in (r-\epsilon,r)$ and $f_{a,j}(y)\geq t(j,k), r\in (r,r+\epsilon)$, or vice versa; so that $r$ is a root of the polynomial $f_{a,k,j}(y)\triangleq f_{a,j}(y)-t(j,k)$.  Let $r_1,r_2,\cdots,r_{(\ell-1) \delta n_q}$ be the sequence of roots of polynomials $f_{a,j,k}(\cdot), j\in [n_q], k\in [\ell-1]$ (including repeated roots), written in non-decreasing order, and let $\mathcal{C}=(\mathbf{c}_0,\mathbf{c}_1,\cdots, \mathbf{c}_{(\ell-1) \delta n_q})$ be the corresponding quantizer outputs, i.e. $\mathbf{c}_{i-1}= \lim_{y\to r_i^-}Q(y), i\in [(\ell-1) \delta n_q]$ and $\mathbf{c}_{(\ell-1) \delta n_q}=\lim_{y\to\infty}Q(y)$. We call $\mathcal{C}$ the \textit{code} associated with the quantizer and it plays an important role in the analysis provided in the sequel. Note that the associated code is an ordered set of vectors.
The size of the code $|\mathcal{C}|$ is defined as the number of unique vectors in $\mathcal{C}$. Each $\mathbf{c}_i= (c_{i,1},c_{i,2},\cdots,c_{i,n_q}), i\in \{0,1,\cdots,(\ell-1)\delta n_q\}$ is called a codeword. For a fixed $j\in [n_q]$, the transition count of position $j$ is the number of codeword indices where the value of the $j$th element changes, and it is denoted by $\kappa_j$,i.e., $\kappa_j\triangleq \sum_{k=1}^{(\ell-1)\gamma n_q}\mathbbm{1}(c_{i_k-1,j}\neq c_{i_k,j})$.
It is straightforward to see that $|\mathsf{P}|=|\mathcal{C}|$ since both cardinalities are equal to the number of unique outputs the quantizer produces. 
The following example clarifies the definitions given above. 
 
%  Note that the points of transition are the boundary points of partition regions $\mathcal{P}_{\mathbf{i}}, \mathbf{i}\in [\ell]^{n_q}$.
% Let us consider the following experiment. We start with an asymptotically large negative output value, i.e. $y\to -\infty$, and gradually increase the value of $y$ and investigate the points of transition of the quantizer. Let  $\lim_{y\to -\infty} Q(y)= \mathbf{c}_0=(c_{0,1},c_{0,2},\cdots,c_{0,n_q})\in [\ell]^{n_q}$, and let $r_1\in \mathbb{R}$ be the first point of transition. That is $Q(y)=\mathbf{c}_0, y\in (-\infty,r_1)$ and $\lim_{y\to r_1^+}Q(y)=\mathbf{c}_1\in [\ell]^{n_q}$, where $\mathbf{c}_0\neq \mathbf{c}_1$. Then, there must exist $j\in [n_q]$ and $k\in [\ell-1]$ such that $\lim_{y\to r_1^-}f_{a,j}(y)<t(j,k)$ and $\lim_{y\to r_1^+}f_{a,j}(y)\geq t(j,k)$, or vice versa; so that $r_1$ is a root of the polynomial $f_{a,k,j}(y)\triangleq f_{a,j}(y)-t(j,k)$.
 \begin{figure}[t]
\centering 
\includegraphics[width=0.8\textwidth]{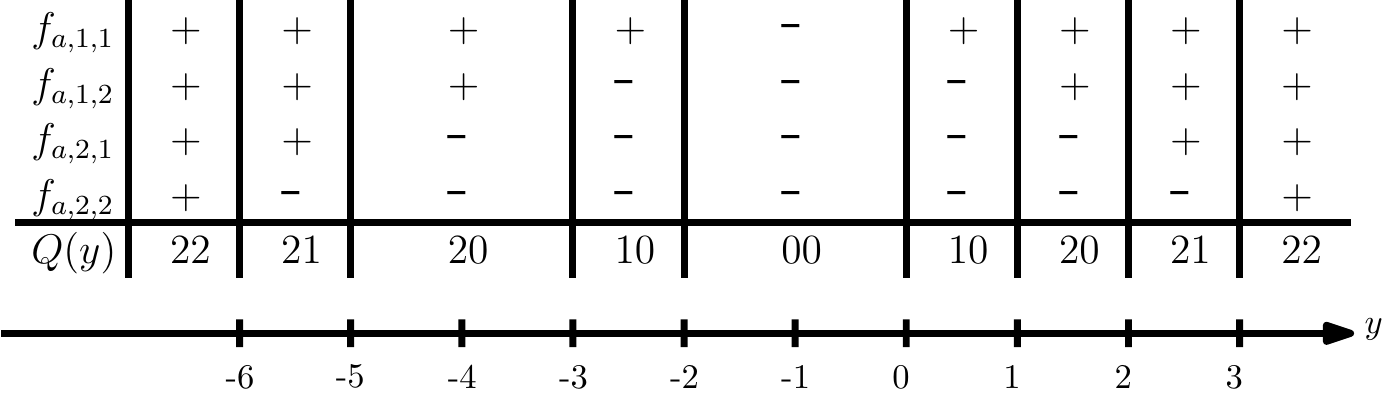}
\caption{The quantizer outputs in Example \ref{Ex:1}. The first four rows show the sign of the function $f_{a,j,k}, j,k\in \{1,2\}$ for the values of $y$ within each interval. The last row shows the quantizer output in that interval.
}
\vspace{-.15in}
\label{fig:code}
\end{figure}

\begin{Example}[\textbf{Associated Code}]
\label{Ex:1}
Let $n_q=\delta=2$ and $\ell=3$ and consider a quantizer characterized by polynomials
$f_{a,1}(y)=y^2+2y$ and $f_{a,2}(y)= y^2+3y, y\in \mathbb{R}$, and thresholds
\begin{align*}
& t(1,1)= 3, \quad t(1,2)=0, \quad t(2,1)= 10, \quad t(2,2)= 18,
\end{align*}
We have:
\begin{align*}
    & f_{a,1,1}(y)= y^2+2y-3, \quad f_{a,1,2}(y)= y^2+2y\\
    & f_{a,2,1}(y)=y^2+3y-10, \quad  f_{a,2,2}(y)= y^2+3y-18.
\end{align*}
The ordered root sequence is $(r_1,r_2,\cdots,r_8)=
(-6,$ $-5,-3,-2,0,1,2,3)$. The 
associated partition is:
\begin{align*}
   & \mathsf{P}= \Big\{[-\infty,-6), (-6,-5),(-5,-3),
 (-3,-2),(-2,0),
 \\&\qquad (0,1),(1,2),(2,3),(3,\infty)\Big\}.
\end{align*}
The associated code is given by $22,21,20,10,00,10,20,21,22$. This is shown in Figure \ref{fig:code}. The size of the code is $|\mathcal{C}|=5$. The high SNR capacity of a SISO channel using this quantizer at the receiver is $\log{|\mathsf{P}|}=\log{|\mathcal{C}|}=\log{5}$.
\end{Example}
\label{sec:prelim}

\subsection{SISO Systems with One-bit ADCs and Quadratic Functions}
To illustrate the usefulness of the notion of associated code of a quantizer, introduced in the prequel, let us prove the high SNR SISO capacity result for $\ell=\delta=2$ given in \cite[Th. 4]{shirani2022MIMO} using the framework introduced in Section \ref{sec:prelim}.
\begin{Proposition}
\label{Prop:1}
Let $h\in \mathbb{R}$ and $n_q>1$. Then, 
\begin{align*}
 \lim_{P\to \infty}   C_{Q}(P,h,n_q,2)= 1+\log{n_q}. 
\end{align*}
\end{Proposition}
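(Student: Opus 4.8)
The plan is to reduce the capacity computation to a purely combinatorial question about the \emph{code} $\mathcal{C}$ of the quantizer, and then to bound $\max_Q |\mathcal{C}|$ from both sides. Since the ADC output $\widehat{W}=Q(Y)$ takes values in $\mathcal{C}$, for any implementable quantizer and any input distribution we have $I(X;\widehat{W})\le H(\widehat{W})\le \log|\mathcal{C}|$, which yields the converse once we show $|\mathcal{C}|\le 2n_q$; conversely, following the high-SNR reasoning behind the claim $\log|\mathsf{P}|=\log|\mathcal{C}|$ in Example~\ref{Ex:1}, a quantizer with $|\mathcal{C}|=2n_q$ distinguishable regions will be shown to achieve $\log(2n_q)=1+\log n_q$ as $P\to\infty$. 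Thus the entire statement follows from establishing $\max_Q|\mathcal{C}|=2n_q$ over quantizers with $\delta=\ell=2$.

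For the converse, first I would note that with $\ell=2$ each position $j$ has a single threshold and $f_{a,j,1}(y)=f_{a,j}(y)-t(j,1)$ is a polynomial of degree at most $\delta=2$, hence has at most two real roots. The points of transition of $Q$ are sign changes of the $f_{a,j,1}$, so their number is at most the total root count $2n_q$, giving $|\mathcal{C}|=|\mathsf{P}|\le 2n_q+1$. The key step is to rule out equality: if $|\mathcal{C}|=2n_q+1$ then there are $2n_q$ distinct transition points, forcing every $f_{a,j,1}$ to be a genuine quadratic with two distinct real roots, i.e. $\kappa_j=2$ for all $j$. A genuine quadratic has a nonzero leading coefficient, so $\operatorname{sign}f_{a,j,1}(y)$ agrees as $y\to-\infty$ and $y\to+\infty$; hence $Q(-\infty)=Q(+\infty)$, i.e. $\mathbf{c}_0=\mathbf{c}_{2n_q}$, contradicting the distinctness of all $2n_q+1$ codewords. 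If instead some $f_{a,j,1}$ is sub-quadratic, the total root count is at most $2n_q-1$ and directly $|\mathsf{P}|\le 2n_q$. Either way $|\mathcal{C}|\le 2n_q$, so $C_{Q}(P,h,n_q,2)\le 1+\log n_q$ for every $P$.

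For achievability I would give an explicit \emph{staircase} construction: fix $r_1<\cdots<r_{2n_q}$ and, for each $j\in[n_q]$, set $f_{a,j}(y)=(y-r_j)(y-r_{n_q+j})$ with $t(j,1)=0$, so position $j$ reads $1$ exactly on the interval $(r_j,r_{n_q+j})$. Sweeping $y$ from $-\infty$ to $+\infty$, the codeword equals the prefix pattern $1^k0^{n_q-k}$ on $(r_k,r_{k+1})$ for $k=0,\dots,n_q$, and thereafter the suffix pattern $0^m1^{n_q-m}$ for $m=1,\dots,n_q$; the only coincidences among these are the shared all-ones and all-zeros vectors, so exactly $2n_q$ distinct codewords arise and $|\mathcal{C}|=2n_q$. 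It remains to realize $\log(2n_q)$ at high SNR: for each $P$ I would jointly scale the roots as $r_i=\sqrt{P}\,\rho_i$ and place one equiprobable input mass point in the interior of each of the $2n_q$ regions, so that the $2n_q-1$ bounded regions have width growing like $\sqrt{P}$ while the noise keeps unit variance. The quantizer then recovers the transmitted region with vanishing error probability, and Fano's inequality yields $I(X;\widehat{W})\to\log(2n_q)$, completing the lower bound.

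The main obstacle is the delicate ``$+1$'' in the converse: a naive count of $2n_q$ roots admits $2n_q+1$ regions, and the crux is the symmetry observation that an even-degree polynomial carries the same sign at both infinities, which collapses the two unbounded regions into a single codeword precisely in the extremal case (together with the bookkeeping for the degenerate sub-quadratic cases). On the achievability side the subtlety is structural: a \emph{nested} choice of the intervals $(r_j,r_{n_q+j})$ would retrace codewords on the way down and yield only $n_q+1$ of them, so the shifted staircase arrangement is essential, as is the joint scaling of the quantizer thresholds with $P$ rather than scaling the input alone, since the region boundaries are otherwise fixed and the bounded middle regions could not be resolved against the noise.
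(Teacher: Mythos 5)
Your proposal is correct and follows essentially the same route as the paper's own proof: the reduction to maximizing $|\mathcal{C}|$, the converse via $\mathbf{c}_0=\mathbf{c}_{2n_q}$ because a degree-two polynomial has the same sign at $\pm\infty$, and achievability via a staircase of quadratics with interleaved roots (the paper uses $f_{a,j}(y)=-(y+n_q+1-j)(y-j)$, which is your construction up to relabeling of the roots). The only differences are cosmetic: you dropped the minus sign on $(y-r_j)(y-r_{n_q+j})$, so position $j$ actually reads $0$ (not $1$) on the interval --- harmless, since complementing every codeword preserves $|\mathcal{C}|=2n_q$ --- and you spell out the sub-quadratic degenerate case and the high-SNR limit (scaling the roots with $\sqrt{P}$ and invoking Fano) more explicitly than the paper, which simply asserts that the high-SNR rate equals $\log|\mathsf{P}|$.
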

\begin{proof}
For a given quantizer, the high SNR achievable rate is equal to $\log{|\mathsf{P}|}= \log{|\mathcal{C}|}$. So, finding the capacity is equivalent to finding the maximum $|\mathcal{C}|$ over all choices of $Q(\cdot)$.
First, let us prove the converse result. Note that $|\mathcal{C}|\leq 2n_q$ since $\mathbf{c}_0=\mathbf{c}_{2n_q}$. The reason is that for the quadratic function $f_{a,j}(\cdot),j\in [n_q]$, we have $\lim_{y\to \infty}f_{a,j}(y)=\lim_{y\to -\infty}f_{a,j}(y)\in \{-\infty,\infty\} $. So, \[c_{0,j}=\lim_{y\to -\infty} \mathbbm{1}(f_{a,j}-t_j>0)=\lim_{y\to \infty} \mathbbm{1}(f_{a,j}-t_j>0)=c_{2n_q,j}.\]
As a result, $\log{|\mathcal{C}|}\leq 1+\log{n_q}$. Next, we prove achievability.
Let $t_j=0, j\in [n_q]$ and $f_{a,j}(y)\triangleq -(y+n_q+1-j)(y-j), j\in [n_q]$. Then, $(r_1,r_2,\cdots,r_{2n_q})\!=\!(-n_q,-n_q\!+\!1,\cdots,-1,1,2,\cdots, n_q)$ and \begin{align*}
    c(i,j)=
    \begin{cases}
    \mathbbm{1}(j\leq i)\qquad& \text{if }\quad  i\leq n_q,\\
    1-\mathbbm{1}(j\leq i-n_q) & \text{otherwise}.
    \end{cases}
\end{align*}
For instance for $n_q=3$, we have $\mathcal{C}=(000,001,011,$ $111,110,100,000)$. It is straightforward to see that the only repeated codewords are $\mathbf{c}_0$ and $\mathbf{c}_{2n_q}$. Hence, $|\mathcal{C}|=2n_q$, and $\log{|\mathcal{C}|}=1+\log{n_q}$ is achievable. 
\end{proof}

We can further provide the following computable expression for the capacity under general assumptions on channel SNR. 

\begin{Theorem}
\label{th:1}
Consider a SISO system parametrized by $(P,h,n_q,\delta,\ell)$, where $P>0, h\in \mathbb{R}, n_q>1$, and $\delta=\ell=2$. Then, the capacity is given by:
\begin{align}
\label{eq:th:1}
    C_Q(P,h,n_q,\delta,\ell)=\sup_{\mathbf{x}\in \mathbb{R}^{2n_q+1}} \sup_{P_{X}\in \mathcal{P}_{\mathbf{x}}} \sup_{\mathbf{t}\in \mathbb{R}^{2n_q}} I(X;\widehat{Y}),
\end{align}
where $\widehat{Y}= Q(hX+N)$, $N$ is a zero-mean Gaussian variable with unit variance, $\mathcal{P}_{\mathbf{x}}$ is the probability simplex on alphabet $\{x_1,x_2,\cdots,x_{2n_q+1}\}$,
and $Q(y)=k$ if $y\in [t_{k},t_{k+1}], k\in \{1,\cdots,2n_q]$ and $Q(y)=0$ if $y>t_{2n_q}$ or $y<t_{1}$.
\end{Theorem}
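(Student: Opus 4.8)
The plan is to first reduce the operational quantity $C_Q(P,h,n_q,2,2)$ to a single-letter optimization over realizable quantizers and input distributions, and then to show that this optimization collapses onto the canonical family in \eqref{eq:th:1}. Because the analog functions $f_{a,j}$ and the threshold matrix are fixed hardware parameters applied symbol-by-symbol, the induced end-to-end channel $X\mapsto \widehat{Y}=Q(hX+N)$ is memoryless, and a standard random-coding/Fano argument yields
\[
C_Q(P,h,n_q,2,2)=\sup_{Q}\ \max_{P_X:\,\mathbb{E}[X^2]\le P} I(X;\widehat{Y}),
\]
where the outer supremum ranges over all quantizers $Q$ realizable by $n_q$ one-bit ADCs together with analog functions in $\mathcal{F}_a$ (degree at most two). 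It then remains to (i) identify this realizable family with the canonical quantizers in \eqref{eq:th:1}, and (ii) bound the number of input mass points needed in the inner maximization.

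For achievability, i.e. the bound ``$\ge$'' in \eqref{eq:th:1}, I would exhibit an explicit realization of every canonical quantizer. Given prescribed transition points $t_1<\cdots<t_{2n_q}$, set every ADC threshold to $0$ and take the \emph{nested} quadratics $f_{a,j}(y)=-(y-t_j)(y-t_{2n_q+1-j})$ for $j\in[n_q]$, so that the $j$th ADC reads $1$ exactly on the interval $(t_j,t_{2n_q+1-j})$. As $y$ sweeps $\mathbb{R}$ the output vector traces a unary ``up-then-down'' pattern whose $2n_q+1$ intervals produce $2n_q$ distinct codewords, the two outer ones coinciding; this is precisely the canonical quantizer $Q$ with $Q(y)=0$ for $y<t_1$ or $y>t_{2n_q}$. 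Since any input distribution supported on $2n_q+1$ points and satisfying the power constraint (incorporated into $\mathcal{P}_{\mathbf{x}}$) is admissible, the rate $I(X;\widehat{Y})$ in \eqref{eq:th:1} is achievable, giving the lower bound.

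For the converse, i.e. the bound ``$\le$'', I would first bound the quantizer complexity as in Proposition~\ref{Prop:1}: each polynomial $f_{a,j}(y)-t_{j,k}$ has degree at most $2$, so there are at most $2n_q$ points of transition and hence at most $2n_q+1$ partition intervals; moreover the associated code satisfies $|\mathcal{C}|\le 2n_q$, so $\widehat{Y}$ takes at most $2n_q$ values. Then, for a fixed realizable $Q$ with optimal $P_X$, I would construct a canonical $Q'$ refining $Q$, placing its $2n_q$ thresholds at the transition points of $Q$ and the remaining ``spare'' thresholds inside existing intervals, so that $\widehat{Y}=g(\widehat{Y}')$ for a deterministic merging map $g$; the data-processing inequality then gives $I(X;\widehat{Y})\le I(X;\widehat{Y}')$, with $Q'$ in the canonical family. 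Finally, a standard Carathéodory/Witsenhausen-type cardinality argument (as in \cite{witsenhausen1980some,singh2009limits,dytso2018discrete}) shows that for an output alphabet of size $2n_q$ under a single second-moment constraint, the inner maximum is attained by an input with at most $2n_q+1$ mass points, justifying the restriction to $\mathbf{x}\in\mathbb{R}^{2n_q+1}$ and $P_X\in\mathcal{P}_{\mathbf{x}}$.

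The main obstacle I anticipate lies in the converse refinement step: matching the \emph{outer-merge} structure of the canonical quantizer (the common label $0$ assigned to $\{y<t_1\}$ and $\{y>t_{2n_q}\}$) to an arbitrary realizable $Q$. When all analog functions are genuinely quadratic, the property $\mathbf{c}_0=\mathbf{c}_{2n_q}$ from Proposition~\ref{Prop:1} makes the merging map $g$ well defined; the delicate cases are those in which some $f_{a,j}$ degenerates to a linear or constant function, whose ``split at infinity'' can break this symmetry. There I would argue that such degeneracies strictly reduce the number of transition points, leaving a spare threshold that can be sent to $+\infty$ so that the mismatched outer region carries vanishing probability, recovering the bound in the limit since the right-hand side of \eqref{eq:th:1} is a supremum over $\mathbf{t}$. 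Verifying that this limiting/refinement argument, combined with the cardinality reduction under the active power constraint, is uniform and does not lose rate is the technical crux of the proof.
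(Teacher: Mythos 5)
Your converse follows the paper's route (count the at most $2n_q$ transition points, note $\mathbf{c}_0=\mathbf{c}_{2n_q}$ for genuine quadratics, then invoke the Witsenhausen--Singh et al.\ cardinality bound to get $2n_q+1$ mass points), and your explicit data-processing/refinement step is a reasonable elaboration of what the paper leaves implicit. The problem is in your achievability construction, and it is a genuine gap.

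The \emph{nested} quadratics $f_{a,j}(y)=-(y-t_j)(y-t_{2n_q+1-j})$ do not produce $2n_q$ distinct codewords. With ADC $j$ reading $1$ exactly on $(t_j,t_{2n_q+1-j})$, the intervals are nested with ADC $n_q$ innermost, so as $y$ increases past $t_{n_q+1},t_{n_q+2},\dots$ the ADCs switch off in the order $n_q, n_q-1,\dots,1$, i.e.\ \emph{last-on-first-off}. On the up-sweep the codeword on $(t_k,t_{k+1})$ is $1^k0^{n_q-k}$, and on the down-sweep the codeword on $(t_{n_q+m},t_{n_q+m+1})$ is $1^{n_q-m}0^{m}$ --- exactly the same thermometer codewords revisited in reverse. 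The code has only $n_q+1$ distinct elements, which is no better than linear analog processing, so your construction does not realize the canonical $2n_q$-cell quantizer in \eqref{eq:th:1} and the lower bound does not follow. The paper instead uses the \emph{staircase} pairing $f_{a,j}(y)=-(y-r_j)(y-r_{n_q+j})$, for which ADC $j$ reads $1$ on $(r_j,r_{n_q+j})$; the ADCs then switch off in the same order they switched on (first-on-first-off), the down-sweep codewords are $0^{m}1^{n_q-m}$, and all $2n_q+1$ intervals receive distinct labels except the two unbounded ones, which share $0^{n_q}$ --- precisely the quantizer required by the theorem. Your proof would be repaired by substituting this pairing (everything else you wrote goes through), but as stated the central constructive step fails.
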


\begin{proof}
We provide an outline of the proof. First, we prove that the input alphabet has at most $2n_q+1$ mass points.
Based on the proof of Proposition \ref{Prop:1}, the channel output can take at most $2n_q$ values. Let the quantized channel output be denoted by $\widehat{Y}$.
Since the conditional measure $P_{\widehat{Y}|X}(\cdot|x), x\in\mathbb{R}$ is continuous in $x$, and $\lim_{x\to \infty} P_{\widehat{Y}|X}(\mathcal{A}|x) = \mathbbm{1}(\hat{y}\in \mathcal{A}), \mathcal{A}\in \mathbb{B}$ for some fixed $\hat{y}$, the conditions in the proof of \cite[Prop. 1]{singh2009limits} hold, and the optimal input distribution has bounded support. Then, using the extension of Witsenhausen's result \cite{witsenhausen1980some} given in \cite[Prop. 2]{singh2009limits}, the optimal input distribution is discrete and takes at most ${2n_q+1}$ values. This completes the proof of converse. To prove achievability, it suffices to show that one can choose the set of quadratic functions $f_{a,j}(\cdot)$ and quantization thresholds such that the resulting quantizer operates as described in the theorem statement.
Let $\mathbf{t}^*$ be the optimal quantizer thresholds in \eqref{eq:th:1}. Let $r_1,r_2,\cdots,r_{2n_q}$ be the elements of $\mathbf{t}^*$ written in non-decreasing order. Define a quantizer with associated polynomials $f_{a,j}(y)\triangleq  -(y-r_{j})(y-r_{n_q+j})$ and zero threshold vector. Then, similar to the proof of Proposition \ref{Prop:1}, the quantization rule gives distinct outputs for $y\in [r_{k},r_{k+1}], k\in \{1,\cdots,2n_q]$ and $y\in [r_{2n_q},\infty) \cup [-\infty,r_{1}]
$ as desired. \end{proof}

\begin{Remark}
The capacity expression in Equation \ref{eq:th:1} can be computed numerically, e.g. using the cutting plane algorithm \cite{huang2005characterization,singh2009limits}, or the extension of Blahut-Arimoto algorithm in \cite{kobayashi2018joint}. This is investigated in Section \ref{sec:num}.
\end{Remark}
\subsection{Low-resolution ADCs and Low-degree Polynomials}
We wish to extend our analysis to SISO systems with $\delta,\ell>2$.
Towards this, 
the following proposition states several useful properties for the code associated with a quantizer $Q(\cdot)$. These are straightforward extensions of the properties shown in the proof of Theorem \ref{th:1} and  their proof is omitted for brevity.
\begin{Proposition}[\textbf{Properties of the Associated Code}] 
\label{Prop:2}
Consider a quantizer $Q(\cdot)$ with threshold matrix $t(1:n_q,1:\ell-1)$ and associated polynomials $f_{a,j}(\cdot), j\in n_q$, such that 
 $f_{a,j,k}(\cdot)\triangleq f_{a,j}(\cdot)- t(j,k), j\in [n_q], k\in [\ell-1]$ do not have repeated roots. The associated code $\mathcal{C}$ satisfies the following:
\begin{enumerate}[leftmargin=*]
\item The number of codewords in $\mathcal{C}$ is equal to $\gamma\triangleq (\ell-1)\delta n_q$, i.e. $\mathcal{C}=(\mathbf{c}_0, \mathbf{c}_1,\cdots, \mathbf{c}_{\gamma-1})$.
    \item All elements of the first codeword $\mathbf{c}_0$ are either equal to $\ell-1$ or equal to $0$, i.e. $c_{i,0}=0, i\in \{0,1,\cdots,\gamma-1\}$ or $c_{i,0}=\ell, i\in \{0,1,\cdots,\gamma-1\}$.
    \item Consecutive codewords differ in only one position, and their $L_1$ distance is equal to one, i.e. $\sum_{j=1}^{n_q}|c_{i,j}-c_{i+1,j}|=1, i\in \{0,1,\cdots,\gamma-1\}$.
    \item  The transition count at every position is $\kappa_j= \frac{\gamma}{n_q}= (\ell-1)\delta, j\in [n_q]$.
    \item Let $i_1,i_2,\cdots, i_{\kappa}$ be the non-decreasingly ordered indices of codewords where the $j$th element has value-transitions. Then, the sequence $(c_{i_1,j},c_{i_2,j},\cdots,c_{i_\kappa,j})$ is periodic, in each period it takes all values between $0$ and $\ell-1$, and  $|c_{i_k,j}-c_{i_{k+1},j}|=1, k\in [\kappa-1]$ holds. Furthermore, $c_{i_1,j}\in \{0,\ell-1\}$.
    \item If $\delta$ is even, then $|\mathcal{C}|\leq min(\ell^{n_q}, (\ell-1)\delta n_q)$ and if $\delta$ is odd, then $|\mathcal{C}|\leq min(\ell^{n_q}, (\ell-1)\delta n_q+1)$
    
\end{enumerate}

\end{Proposition}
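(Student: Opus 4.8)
The plan is to reduce every claim to the single observation that, for each position $j\in[n_q]$, the $j$th quantizer coordinate is a \emph{monotone staircase} of the analog output. Concretely, from the definition of $Q_j(\cdot)$ one checks, exactly as in Example~\ref{Ex:1}, that
\[
Q_j(y)=g_j\!\left(f_{a,j}(y)\right),\qquad g_j(u)\triangleq \sum_{k=1}^{\ell-1}\mathbbm{1}(u\ge t(j,k)),
\]
where $g_j:\mathbb{R}\to\{0,1,\dots,\ell-1\}$ is non-decreasing and increases by exactly one at each distinct threshold value. Hence the points of transition of the $j$th coordinate are precisely the real roots of $f_{a,j,k}(\cdot)=f_{a,j}(\cdot)-t(j,k)$, $k\in[\ell-1]$, and by the no-repeated-roots hypothesis together with the generic assumption that roots of distinct $(j,k)$ pairs do not coincide, each point of transition belongs to a unique pair $(j,k)$. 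First I would record these facts, as they drive all six parts.

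For the counting claims (Properties~1 and~4) I would argue as follows. Each $f_{a,j,k}$ has degree $\delta$, hence at most $\delta$ real roots, all simple under the hypothesis; in the extremal regime relevant to capacity every such polynomial has $\delta$ \emph{distinct real} roots. The transitions of coordinate $j$ are the union over $k\in[\ell-1]$ of the roots of $f_{a,j,k}$, giving $\kappa_j=(\ell-1)\delta$, and summing over the $n_q$ positions yields $\gamma=(\ell-1)\delta n_q$ transitions in total, which is Property~1 read with the last codeword identified with the first on the one-point compactification of $\mathbb{R}$. The only care needed here is the distinction between the always-valid bound ``$\le\delta$'' and the stated equality, which requires the all-real-distinct-roots regime.

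The local and structural claims (Properties~2, 3, and~5) follow from the staircase representation. For Property~2, as $y\to-\infty$ the polynomial $f_{a,j}(y)$ diverges to $+\infty$ or $-\infty$ according to the sign of its leading coefficient, so $g_j\!\left(f_{a,j}(y)\right)$ saturates at $\ell-1$ or at $0$; thus every entry of $\mathbf{c}_0$ lies in $\{0,\ell-1\}$. For Property~3, at a transition exactly one $f_{a,j}$ crosses exactly one threshold, and because the crossing is transversal (simple root) the staircase $g_j$ moves to an \emph{adjacent} level, so $Q$ changes in a single coordinate by $\pm1$ and consecutive codewords are at $L_1$ distance one. Property~5 is the delicate one: I would decompose $\mathbb{R}$ into the at most $\delta$ maximal intervals on which $f_{a,j}$ is monotone (its derivative has degree $\delta-1$, hence at most $\delta-1$ critical points). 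On each such interval $Q_j=g_j\circ f_{a,j}$ is monotone; since a monotone piece meets any given threshold at most once while in the extremal regime each threshold is crossed exactly $\delta$ times, the $\delta$ crossings of each threshold must be distributed one per piece, so every monotone piece sweeps the full band $0,1,\dots,\ell-1$. As consecutive pieces alternate the direction of $f_{a,j}$, the induced value sequence of coordinate $j$ is a triangular (zigzag) wave that rises to $\ell-1$, returns to $0$, and repeats, visiting all levels each period in unit steps and starting at an extreme. This interleaving argument — that the $\delta$ crossings of each threshold fall one per monotone piece — is the main obstacle, and is exactly where the all-roots-real assumption is indispensable.

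Finally, for Property~6 I would bound the number of \emph{distinct} codewords two ways. Trivially $|\mathcal{C}|\le\ell^{n_q}$ since codewords live in $[\ell]^{n_q}$. For the second bound I compactify $\mathbb{R}$ to a circle and count transitions around it. When $\delta$ is even, $f_{a,j}(y)$ tends to the same infinity at both ends for every $j$, so $\mathbf{c}_0=\mathbf{c}_\gamma$; the code is then a closed walk through at most $\gamma=(\ell-1)\delta n_q$ positions, giving $|\mathcal{C}|\le\min(\ell^{n_q},(\ell-1)\delta n_q)$. When $\delta$ is odd, each $f_{a,j}$ tends to opposite infinities at the two ends, so $\mathbf{c}_0$ and $\mathbf{c}_\gamma$ differ (an extra ``transition at infinity''), the walk visits at most $\gamma+1$ positions, and $|\mathcal{C}|\le\min(\ell^{n_q},(\ell-1)\delta n_q+1)$. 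Thus Property~6 is a direct consequence of the boundary analysis of Property~2 combined with the parity of $\delta$.
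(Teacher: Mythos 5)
Your proposal is correct and follows the same approach the paper gestures at: the paper omits the proof of Proposition~\ref{Prop:2}, stating only that the properties are ``straightforward extensions'' of the arguments in Proposition~\ref{Prop:1} and Theorem~\ref{th:1}, and your treatment of Properties 1, 2, 4, and 6 (root counting of the $f_{a,j,k}$ and the limit of $f_{a,j}$ at $\pm\infty$ depending on the parity of $\delta$, with the closed-walk versus open-walk count) is exactly the generalization of the argument used there for $\ell=\delta=2$. Two points are worth keeping explicit in a written-up version. First, you correctly note that the equalities in Properties 1 and 4 (and hence the full zigzag of Property 5) hold only when every $f_{a,j,k}$ has $\delta$ distinct real roots and no two distinct pairs $(j,k)$ share a root; the paper's hypothesis as literally stated does not guarantee this, so flagging the ``all-real, all-distinct'' regime (which is the capacity-achieving one) is the right call rather than a defect of your proof. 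Second, your monotone-piece decomposition for Property 5 --- that $f_{a,j}'$ has at most $\delta-1$ critical points, each monotone piece meets each threshold at most once, and the count $\kappa_j=(\ell-1)\delta$ forces exactly one crossing of each threshold per piece, yielding the alternating full sweeps --- is a genuine addition, since this is the one property whose proof does not reduce to the $\ell=\delta=2$ case and the paper supplies no argument for it; the pigeonhole step there is sound and is the right key lemma.
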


Next, we study the capacity region for SISO systems when $\ell=2$ and $\delta$ is even. First we prove two useful propositions. The first one proves that given an ordered set $\mathcal{C}$ satisfying the properties in Proposition \ref{Prop:2}, one can always construct a quantizer whose associated code is equal to $\mathcal{C}$. The second proposition provides conditions under which there exists a code satisfying the properties in Proposition \ref{Prop:2}. The proof ideas follow techniques used in study of balanced and locally balanced gray codes \cite{bhat1996balanced,bykov2016locally}. 
Combining the two results allows us to characterize the necessary and sufficient conditions for existence of quantizers with desirable properties.

\begin{Proposition}[\textbf{Quantizer Construction}]
\label{Prop:3}
Let $\ell=2, n_q\in \mathbb{N}$ and $\delta$ be an even number. 
Given an ordered set $\mathcal{C}\subset \{0,1\}^{n_q}$ satisfying properties 1)-5) in Proposition \ref{Prop:2}, and a sequence of non-decreasing real numbers $r_1,r_2,\cdots, r_{\gamma}$, where $\gamma=\delta n_q$. There exists a quantizer $Q(\cdot)$ with zero threshold vector and associated polynomials $f_{a,j}(\cdot), j\in [n_q]$ such that its associated code is $\mathcal{C}$, and $r_1,r_2,\cdots, r_{\gamma}$ is the non-decreasing sequence of roots of its associated polynomials $f_{a,j}(\cdot),j\in [n_q]$.
\end{Proposition}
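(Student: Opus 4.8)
The plan is to build each analog polynomial $f_{a,j}(\cdot)$ separately, one per ADC position, so that its sign pattern reproduces the $j$th coordinate of the prescribed code. Since $\ell=2$ and the threshold vector is zero, we have $t(j,1)=0$, and the $j$th quantizer coordinate is simply $Q_j(y)=\mathbbm{1}(f_{a,j}(y)\geq 0)$; thus specifying the associated code amounts to prescribing, for each $j\in[n_q]$, the points at which $f_{a,j}$ changes sign together with the sign of $f_{a,j}$ at $-\infty$. By property 3), consecutive codewords $\mathbf{c}_{i-1}$ and $\mathbf{c}_i$ differ in exactly one coordinate; let $j(i)\in[n_q]$ denote that coordinate, so that the transition carried by the root $r_i$ is unambiguously attributed to a single polynomial.

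First I would partition the prescribed roots according to the position they drive: for each $j\in[n_q]$ set $R_j\triangleq\{\,r_i : j(i)=j\,\}$. By property 4) the transition count of every position is $\kappa_j=(\ell-1)\delta=\delta$, so $|R_j|=\delta$ and $\bigcup_{j}R_j=\{r_1,\dots,r_\gamma\}$ with $\gamma=\delta n_q$. I then define the analog polynomials as signed products of the corresponding linear factors,
\[
 f_{a,j}(y)\;\triangleq\; s_j\prod_{r\in R_j}(y-r),\qquad j\in[n_q],
\]
with zero threshold vector, where $s_j\in\{+1,-1\}$ is fixed by the requirement $Q_j(y)=c_{0,j}$ as $y\to-\infty$. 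Because $\delta$ is even, the monic product $\prod_{r\in R_j}(y-r)$ tends to $+\infty$ at both $\pm\infty$; hence taking $s_j=+1$ when $c_{0,j}=1$ and $s_j=-1$ when $c_{0,j}=0$ not only fixes the left-end value but also forces $Q_j(y)=c_{0,j}$ as $y\to+\infty$, i.e. $\mathbf{c}_\gamma=\mathbf{c}_0$. This is the single place where evenness of $\delta$ is used, and it is exactly consistent with property 2).

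It remains to check that this quantizer has associated code $\mathcal{C}$, which I would do coordinatewise by verifying $Q_j(y)=c_{i,j}$ on each interval $(r_i,r_{i+1})$ (with $r_0=-\infty$, $r_{\gamma+1}=+\infty$). On $(-\infty,r_1)$ the claim holds by the choice of $s_j$. Crossing a root $r_i\in R_j$ is a simple zero of $f_{a,j}$, so its sign flips and $Q_j$ toggles, matching the toggle prescribed by $j(i)=j$; crossing a root $r_i\notin R_j$ leaves the sign of $f_{a,j}$ unchanged, matching the constancy of coordinate $j$ across that transition. For $\ell=2$, property 5) reduces to the statement that the $j$th coordinate starts at $c_{0,j}\in\{0,1\}$ and alternates at each of its $\delta$ transitions, which is precisely the alternating sign behaviour of a real polynomial with simple roots; hence the recovered column equals $c_{\cdot,j}$. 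Doing this for every $j$ gives $Q(y)=\mathbf{c}_i$ on $(r_i,r_{i+1})$, so the associated code is $\mathcal{C}$ and the non-decreasing root sequence of $\{f_{a,j}\}_{j}$ is exactly $r_1,\dots,r_\gamma$.

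The main obstacle I anticipate is coincident roots, i.e. ties in $r_1\le\cdots\le r_\gamma$ attributed to two different positions. Under the left-limit convention defining the associated code, a tie $r_i=r_{i+1}$ would force $\mathbf{c}_{i-1}=\mathbf{c}_i$ and violate the unit $L_1$-spacing of property 3); consequently a code obeying 1)--5) forces the driving roots to be realized as genuine distinct sign-changes, and I would argue that the $r_i$ may be taken strictly increasing without loss of generality (ties, if present, are removed by an arbitrarily small order-preserving perturbation that leaves every $j(i)$, hence $\mathcal{C}$, unchanged). A secondary routine point is the behaviour at the roots themselves, where $f_{a,j}=0$ yields $Q_j=1$ under the left-closed threshold convention; this affects only a measure-zero set and does not alter the open partition regions that determine $\mathsf{P}$ and $\mathcal{C}$.
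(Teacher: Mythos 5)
Your construction is essentially identical to the paper's: the paper likewise attributes each root $r_k$ to the bit position $t_k$ at which the code transitions and sets $f_{a,j}(y)=-\prod_{k:t_k=j}(y-r_k)$ with zero threshold vector, assuming without loss of generality that $\mathbf{c}_0$ is all-zero (where you instead carry an explicit sign $s_j$ to handle a general first codeword). Your additional checks --- the coordinatewise sign-flip verification, the observation that evenness of $\delta$ forces $\mathbf{c}_\gamma=\mathbf{c}_0$ consistently with property 2), and the perturbation argument for coincident roots --- are correct elaborations of the same argument rather than a different route.
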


\begin{proof}
Without loss of generality, let us assume that $\mathbf{c}_0$ is the all-zero sequence.  
Let $\gamma$ be the number of codewords in $\mathcal{C}$. Note that in general $\gamma$ may be larger than $|\mathcal{C}|$ since there might be repeated codeword sequences.
Let $t_1,t_2,\cdots,t_{\gamma-1}$ be the transition sequence of $\mathcal{C}$. That is, $t_k, k\in \{1,\dots,\gamma-1\}$ is the bit position which is different between $\mathbf{c}_{k-1}$ and $\mathbf{c}_{k}$. Consider a  quantizer $Q(\cdot)$ with zero threshold and associated polynomials $f_{a,j}(y)\triangleq -\prod_{k: t_k=j}(y-r_k), j\in [n_q]$. Then, $r_1,r_2,\cdots, r_{\gamma}$ are the non-decreasing sequence of roots of $f_{a,j}(\cdot), j\in [n_q]$, and the associated code of the quantizer $Q(\cdot)$ is $\mathcal{C}$ as desired.
\end{proof}
\begin{Proposition}(\textbf{Code Construction})
Let $\ell=2$, $n_q\in \mathbb{N}$, and $\kappa_1$, $\kappa_2,\cdots,\kappa_{n_q}$ be even numbers such that $|\kappa_j-\kappa_{j'}|\leq 2, j,j'\in [n_q]$. Then, there exists a code $\mathcal{C}$ with transition count at position j equal to $\kappa_j, j\in [n_q]$ satisfying properties 1), 2), 3), and 5) in Proposition \ref{Prop:2} such that $|\mathcal{C}|=\min\{2^{n_q}, \sum_{j=1}^{n_q}\delta_j\}$. Particularly, if $\kappa_j=\delta, j\in [n_q]$, then there exists $\mathcal{C}$ with $|\mathcal{C}|= \min\{2^{n_q},\delta n_q\}$ satisfying properties 1)-5) in Proposition \ref{Prop:2}.
\label{Prop:4}
\end{Proposition}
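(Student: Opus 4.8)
The plan is to strip the statement down to a purely combinatorial claim about closed walks in the hypercube. An ordered code $\mathcal{C}=(\mathbf{c}_0,\mathbf{c}_1,\ldots,\mathbf{c}_{\gamma-1})$ over $\{0,1\}^{n_q}$ that satisfies property 3) is exactly a closed walk in the $n_q$-dimensional hypercube (the graph on $\{0,1\}^{n_q}$ with edges between vectors at Hamming distance one): consecutive codewords are adjacent, and the transition count $\kappa_j$ is the number of the $\gamma=\sum_{j}\kappa_j$ steps that flip coordinate $j$. Starting the walk at the all-zero vector secures property 2); the evenness of each $\kappa_j$ returns every coordinate to its initial value, so the walk closes with $\gamma$ codewords (property 1)); and for $\ell=2$ property 5) is automatic, since a binary coordinate that is toggled repeatedly simply alternates $0,1,0,1,\ldots$ at its transition indices. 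Under this dictionary $|\mathcal{C}|$ is the number of distinct vertices the walk visits, and the value $\min\{2^{n_q},\sum_j\kappa_j\}$ splits into two regimes: if $\sum_j\kappa_j\le 2^{n_q}$ the goal is a self-avoiding closed walk of length $\sum_j\kappa_j$, while if $\sum_j\kappa_j\ge 2^{n_q}$ it is a closed walk that covers all $2^{n_q}$ vertices.

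First I would construct the walk by induction on $n_q$, adapting the recursive doubling behind balanced Gray codes \cite{bhat1996balanced}. Given a suitable walk on the first $n_q-1$ coordinates, I would lift it to $n_q$ coordinates by selecting some of its steps and splicing in excursions that flip the new coordinate: each such splice replaces one step by a short detour that crosses into the opposite face of the cube, traverses a contiguous arc there, and crosses back, adding exactly two to the new coordinate's transition count while redistributing the remaining budget over the old coordinates. The hypothesis $|\kappa_j-\kappa_{j'}|\le 2$ lets the $\gamma$ transitions be apportioned so that at every stage the partial per-coordinate counts stay within two of one another; this near-balance is precisely the invariant that the Bhat--Savage and locally balanced constructions \cite{bykov2016locally} are designed to preserve, and it is what ensures that enough compatible splice points are always available.

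In the under-filled regime $\sum_j\kappa_j\le 2^{n_q}$ I would keep every spliced detour vertex-disjoint from the arc built so far, so that no vertex repeats and $|\mathcal{C}|=\sum_j\kappa_j$; the slack needed to route each detour through fresh vertices is exactly what $\sum_j\kappa_j\le 2^{n_q}$ provides. In the over-filled regime $\sum_j\kappa_j\ge 2^{n_q}$ I would first realize a covering (Hamiltonian) balanced walk whose per-coordinate counts are dominated by the targets, and then absorb the surplus $\sum_j\kappa_j-2^{n_q}$ by inserting back-and-forth detours of the form ``flip $j$, flip $j$'', each of which adds two to $\kappa_j$ and no new vertices while preserving adjacency and the alternation demanded by property 5); distributing the surplus two units at a time meets every target exactly and keeps $|\mathcal{C}|=2^{n_q}$. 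The special case $\kappa_j=\delta$ for all $j$ follows at once, and when $\delta n_q=2^{n_q}$ it specializes to a totally balanced Hamiltonian Gray code, consistent with the fact that this equality forces $n_q$ to be a power of two, the setting in which such codes are known to exist.

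The step I expect to be the main obstacle is the inductive lift, since a single splice must simultaneously hit the exact target multiplicity in the new coordinate, preserve self-avoidance (under-filled) or full coverage (over-filled), and keep every old coordinate's partial count inside the $\pm 2$ band. My resolution is to carry the strengthened induction hypothesis that, after the first $k$ coordinates are processed, the current walk is simple (respectively covering) and its partial transition counts differ pairwise by at most two; the assumption $|\kappa_j-\kappa_{j'}|\le 2$ is exactly the condition under which this invariant can be maintained, after which the verification that properties 1)--3) and 5) survive each splice is routine.
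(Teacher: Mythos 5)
Your proposal is correct and follows essentially the same route as the paper's own proof: both split on whether $\sum_{j}\kappa_j$ reaches $2^{n_q}$, invoke a balanced Gray code for the covering regime, and use a Bhat--Savage-style recursive lifting for the non-covering regime, at a comparable level of outline detail. The differences are cosmetic --- you induct one coordinate at a time and absorb the surplus with back-and-forth detours, while the paper lifts two coordinates at a time and pads the balanced Gray code with repeated codewords after reordering bit positions.
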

\begin{proof}
We provide an outline of the proof.  Let us consider the following cases:
\\\textbf{Case 1:} $\sum_{j=1}^{n_q }\kappa_j\geq 2^{n_q}$
\\In this case, one can use a balanced Gray code \cite{bhat1996balanced} to construct $\mathcal{C}$. A balanced Gray code is a (binary) code where consecutive codewords have Hamming distance equal to one, and each of the bit positions changes value either $2\floor{\frac{2^{n_q}}{2n_q}}$ times or $2\ceil{\frac{2^{n_q}}{2n_q}}$ times. If $\min_{j\in [n_q]} \kappa_j\geq 2\ceil{\frac{2^{n_q}}{2n_q}}$ the proof is complete as one can concatenate the balanced gray code with a series of additional repeated codewords to satisfy the transition counts, and since the balanced gray code is a subcode of the resulting code, we have $|\mathcal{C}|=2^{n_q}$. Otherwise,  there exists $j\in [n_q]$ such that $\kappa_j< 2\ceil{\frac{2^{n_q}}{2n_q}}$.
In this case, without loss of generality, let us assume that $\kappa_1\leq \kappa_2,\cdots \leq \kappa_{n_q}$. Note that since $|\kappa_j-\kappa_j'|\leq 2, j,j'\in [n_q]$ and $\kappa_j, j\in [n_q]$ are even, there is at most one $j^*\in [n_q]$ such that $\kappa_{j^*}\leq \kappa_{j^*+1}$. Let $\kappa'_1,\kappa'_2,\cdots,\kappa'_{n_q}$ be the transition count sequence of a balanced gray code $\mathcal{C}'$ written in non-decreasing order. 
Note that $2\ceil{\frac{2^{n_q}}{2n_q}}-2\floor{\frac{2^{n_q}}{2n_q}}=2$. Hence, similar to the above argument, there can only be one $j'\in [n_q]$ for  which   $\kappa_{j'}\leq \kappa_{j'+1}$. 
Since $\sum_{j=1}^{n_q} \kappa_j\geq  2^{n_q}= \sum_{j=1}^{n_q} \kappa'_j$, we must have $j^*\leq j'$. So, the balanced gray code can be used as a subcode similar to the previous case by correctly ordering the bit positions to match the order of $\kappa_j, j\in [n_q]$. This completes the proof. 
\\\textbf{Case 2:} $\sum_{j=1}^{n_q}\kappa_j< 2^{n_q}$
\\The proof is based on techniques used in the construciton of balanced Gray codes \cite{bhat1996balanced}.
We prove the result by induction on $n_q$.
The proof for $n_q=1,2$ is straightforward and follows by construction of length-one and length-two sequences. For $n_q>2$, Assume that the result holds for all $n'_q\leq n_q$. Without loss of generality, assume that $\kappa_1\leq \kappa_2,\leq \cdots \leq \kappa_{n_q}$. The proof considers four sub-cases as follows.
\\\textbf{Case 2.i:} $\sum_{j=3}^{n_q}\kappa_j\in [0,2^{n_q-2}]$
\\In this case, by the induction assumption, there exists $\mathcal{C}'$,  a code with codewords of length $n_q-2$, whose transition sequence is $\kappa_3,\kappa_4,\cdots,\kappa_{n_q}$, and $|\mathcal{C}'|= \sum_{j=3}^{n_q}\kappa_j$. We construct $\mathcal{C}$ from $\mathcal{C}'$ as follows. Let $\mathbf{c}_{0}=(0,0,\mathbf{c}'_0)$, $\mathbf{c}_{1}=(0,1,\mathbf{c}'_0)$, $\mathbf{c}_{2}=(1,1,\mathbf{c}'_0)$,
$\mathbf{c}_{3}=(1,0,\mathbf{c}'_0)$,
$\mathbf{c}_{4}=(1,0,\mathbf{c}'_1)$,
$\mathbf{c}_{5}=(0,0,\mathbf{c}'_1)$, $\mathbf{c}_{6}=(0,1,\mathbf{c}'_1)$,
$\mathbf{c}_{7}=(1,1,\mathbf{c}'_1)$,$\cdots$. This resembles the procedure for constructing balanced gray codes \cite{bhat1996balanced}. We continue concatenating the first two bits of each codeword in $\mathcal{C}$ to the codewords in $\mathcal{C}'$ using the procedure described above until $\kappa_1$ transitions for position 1 and $\kappa_2$ transitions for position 2 have taken place. Note that this is always possible since i) for each two codewords in $\mathcal{C}'$, we `spend' two transitions of each of the first and second positions in $\mathcal{C}$ to produce four new codewords, ii) $\kappa_2-\kappa_1\leq 2$, and iii) $\kappa_2\leq \sum_{j=3}^{n_q}\kappa_j$, where the latter condition ensures that we do not run out of codewords in $\mathcal{C}'$ before the necessary transitions in positions 1 and 2 are completed. After $\kappa_2+1$ codewords, the transitions in positions 1 and 2 are completed, and the last produced codeword is $(0,0,\mathbf{c}'_{\kappa_2+1})$  since $\kappa_1$ and $\kappa_2$ are both even. To complete the code $\mathcal{C}$, we add $(0,0,\mathbf{c}'_{i}), i\in [\kappa_2+2,\sum_{j=3}^{n_q}\kappa_j]$. Then, by construction, we have $|\mathcal{C}|=|\mathcal{C}'|+\kappa_1+\kappa_2=\sum_{j=1}^{n_q}\kappa_j$ and the code satisfied Properties 1), 2), 3), and 5) in Proposition \ref{Prop:2}. 
\\\textbf{Case 2.ii:}$\sum_{j=3}^{n_q}\kappa_j\in [2^{n_q-2}, 2^{n_q-1}]$
\\ Similar to the previous case, let $\mathcal{C}'$ be a balanced gray code with codeword length $n_q-2$ and transition counts $\kappa'_1\leq \kappa'_2\leq \cdots\leq \kappa'_{n_q-2}$. Define $\kappa''_{j}=\kappa_j- \kappa'_{j+2}, j\in \{3,4,\cdots,n_q\}$. Note that $\kappa''_j$ satisfy the conditions on transition counts in the proposition statement, and hence by the induction assumption, there exists a code $\mathcal{C}''$ with transition counts $\kappa''_j, j\in [n_q-2]$. The proof is completed by appropriately concatenating $\mathcal{C}'$ and $\mathcal{C}''$ to construct $\mathcal{C}$. Let $\gamma''$ be the number of codewords in $\mathcal{C}''$ and define $\mathbf{c}_i=(0,0,\mathbf{c}''_i), i\in [\gamma'']$, $\mathbf{c}_{\gamma''+1}=(0,1,\mathbf{c}''_{\gamma''})$,  $\mathbf{c}_{\gamma''+2}=(1,1,\mathbf{c}''_{\gamma''})$, $\mathbf{c}_{\gamma''+3}=(1,0,\mathbf{c}''_{\gamma''})$, $\mathbf{c}_{\gamma''+4}=(1,0,\mathbf{c}'_{1})$,$\cdots$. Similar to the previous case, it is straightforward to show that this procedure yields a code $\mathcal{C}$ with the desired transition sequence. 

The proof for the two subcases where $\sum_{j=3}^{n_q}\kappa_j\in [2^{n_q-1},3 \times 2^{n_q-2}]$ and $\sum_{j=3}^{n_q}\kappa_j\in [3\times 2^{n_q-1}, \times 2^{n_q-1}]$ is similar and is ommited for brevity.
\end{proof}
Using Propositions \ref{Prop:3} and \ref{Prop:4}, we characterize the SISO capacity for $\ell=2$ and even-valued $\delta$.
\begin{Theorem}
\label{th:2}
Consider a SISO system parametrized by $(P,h,n_q,\delta,\ell)$, where $P>0, h\in \mathbb{R}, n_q\in \mathbb{N}$, $\delta\in \{2,4,6,\cdots\}$, and $\ell=2$. Then, the capacity is given by:
\begin{align}
\label{eq:th:2}
    C_Q(P,h,n_q,\delta,\ell)=\sup_{\mathbf{x}\in \mathbb{R}^{ \Gamma}} \sup_{P_{X}\in \mathcal{P}_{\mathbf{x}}} \sup_{\mathbf{t}\in \mathbb{R}^{\Gamma-1}} I(X;\widehat{Y}),
\end{align}
where $\Gamma\triangleq \min(2^{n_q}, \delta n_q)$,
$\widehat{Y}= Q(hX+N)$, $N$ is a zero-mean Gaussian variable with unit variance and, $\mathcal{P}_{\mathbf{x}}$ is the probability simplex on alphabet $\{x_1,x_2,\cdots,x_{\Gamma}\}$,
and $Q(y)=k$ if $y\in [t_{k},t_{k+1}], k\in \{1,\cdots,\Gamma-1]$ and $Q(y)=0$ if $y>t_{\Gamma-1}$ or $y<t_{1}$.
\end{Theorem}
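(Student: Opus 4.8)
The plan is to prove \eqref{eq:th:2} by the same two-sided strategy used for Theorem \ref{th:1}, replacing the explicit quadratic construction by the code-theoretic machinery of Propositions \ref{Prop:3} and \ref{Prop:4}. As in the quadratic case, $I(X;\widehat{Y})$ depends on the analog processing only through the induced scalar quantizer $Q(hX+N)$, so it suffices to (i) upper bound the number of distinct outputs any realizable quantizer can produce, (ii) reduce the optimization over input laws to a finite simplex, and (iii) exhibit, for every scalar quantizer of the staircase form in \eqref{eq:th:2}, a choice of $n_q$ degree-$\delta$ polynomials and one-bit thresholds that realizes it.

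For the converse, I would first invoke Proposition \ref{Prop:2}, property 6), specialized to $\ell=2$ and even $\delta$, which gives $|\mathcal{C}|\le\min(2^{n_q},\delta n_q)=\Gamma$; hence the partition $\mathsf{P}$ has at most $\Gamma$ cells and $\widehat{Y}=Q(hX+N)$ takes at most $\Gamma$ distinct values for every admissible quantizer. Because $\delta$ is even, each $f_{a,j}$ has a common sign at $\pm\infty$, so the two unbounded cells carry the same codeword and $P_{\widehat{Y}\mid X}(\cdot\mid x)$ tends to a single point mass as $x\to\pm\infty$; together with continuity of $x\mapsto P_{\widehat{Y}\mid X}(\cdot\mid x)$, this is exactly the hypothesis of \cite[Prop.~1]{singh2009limits} guaranteeing that an optimal input has bounded support, whereupon the Witsenhausen-type bound \cite[Prop.~2]{singh2009limits},\cite{witsenhausen1980some} forces it to be discrete with the number of mass points appearing in \eqref{eq:th:2}. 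Finally, since the additive noise is Gaussian and hence log-concave, an output-cardinality-constrained quantizer may without loss be taken to have interval decision regions, so the supremum over all realizable quantizers is attained within the staircase family of \eqref{eq:th:2}; this yields the upper bound.

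For achievability, let $\mathbf{t}$ and $(\mathbf{x},P_X)$ nearly attain the right-hand side of \eqref{eq:th:2}, and let $r_1\le r_2\le\cdots$ be the nondecreasing sequence of prescribed transition abscissae derived from $\mathbf{t}$. I would apply Proposition \ref{Prop:4} with transition counts $\kappa_j$ chosen even, within $2$ of one another, and summing to $\Gamma$ (i.e.\ $\kappa_j=\delta$ when $\delta n_q\le 2^{n_q}$, and balanced values otherwise) to obtain an ordered code $\mathcal{C}\subset\{0,1\}^{n_q}$ satisfying properties 1)--5) of Proposition \ref{Prop:2} with $|\mathcal{C}|=\min(2^{n_q},\delta n_q)=\Gamma$. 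Feeding $\mathcal{C}$ and $r_1,r_2,\ldots$ into Proposition \ref{Prop:3} then yields a quantizer with zero threshold vector and degree-$\delta$ polynomials whose associated code is exactly $\mathcal{C}$ and whose roots sit at the prescribed abscissae. The even-degree parity makes the two unbounded cells coincide, while the Gray structure (properties 3) and 5)) together with maximality $|\mathcal{C}|=\Gamma$ keeps the interior cells pairwise distinct; hence the realized scalar quantizer is precisely the $Q(\cdot)$ of \eqref{eq:th:2}, and evaluating $I(X;\widehat{Y})$ recovers the target value, giving the matching lower bound.

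I expect the crux to lie in step (iii): certifying that a single code can be simultaneously maximal ($|\mathcal{C}|=\Gamma$), even in every transition count, and realizable with its roots pinned at the prescribed thresholds while collapsing only the two unbounded cells. The even-$\delta$ hypothesis is precisely what makes these requirements compatible, since it equalizes the boundary codewords, as in property 2), and matches the even per-position transition counts demanded by Proposition \ref{Prop:4} and by the balanced-Gray-code constructions of \cite{bhat1996balanced}; this is also why the argument does not transfer verbatim to odd $\delta$, where property 6) of Proposition \ref{Prop:2} carries an extra $+1$ and the boundary cells no longer merge. A secondary bookkeeping point is matching the number of realized cells to the $\Gamma$-region structure of \eqref{eq:th:2}: when $\delta n_q>2^{n_q}$ one cannot spend the full degree budget on every polynomial without creating repeated interior cells, so the transition counts must be lowered to balanced values summing to $\Gamma$, which stay feasible exactly because $\delta n_q>2^{n_q}$ forces $\max_j\kappa_j\le\delta$.
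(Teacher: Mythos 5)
Your proposal matches the paper's own proof, which is itself only a two-line sketch: the converse via the code-cardinality bound of Proposition \ref{Prop:2} (the paper cites Item 4, you invoke the equivalent Item 6) combined with the bounded-support/Witsenhausen reduction inherited from the proof of Theorem \ref{th:1}, and achievability via the code of Proposition \ref{Prop:4} realized as a quantizer through Proposition \ref{Prop:3}. The extra details you supply --- the explicit appeal to Proposition \ref{Prop:3}, the handling of the $\delta n_q>2^{n_q}$ bookkeeping, and the justification that interval decision regions suffice --- flesh out steps the paper leaves implicit rather than constituting a different route.
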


The proof follows by similar arguments as in the proof of Theorem \ref{th:1}. The converse follows from Proposition \ref{Prop:2} Item 4). Achievability follows from Proposition \ref{Prop:4}.

Furthermore, using property 6) in Proposition \ref{Prop:2} along with the proof of Theorem \ref{th:1}, we derive the following upper and lower bounds to the case when $\delta$ is an odd number. 

\begin{Theorem}
Consider a SISO system parametrized by $(P,h,n_q,\delta,\ell)$, where $P>0, h\in \mathbb{R}, n_q\in \mathbb{N}$, $\delta\in \{1,3,5,\cdots\}$, and $\ell=2$. Then, the capacity satisfies:
\begin{align}
\label{eq:th:3}
 \sup_{\mathbf{x}\in \mathbb{R}^{\Gamma}} \sup_{P_{X}\in \mathcal{P}_{\mathbf{x}}} \sup_{\mathbf{t}\in \mathbb{R}^{\Gamma-1}} I(X;\widehat{Y})  & \leq C_Q(P,h,n_q,\delta,\ell)
 \\&\nonumber \leq \sup_{\mathbf{x}\in \mathbb{R}^{\Gamma'}} \sup_{P_{X}\in \mathcal{P}_{\mathbf{x}}} \sup_{\mathbf{t}\in \mathbb{R}^{\Gamma'-1}} I(X;\widehat{Y}),
\end{align}
where $\Gamma\triangleq \min(2^{n_q}, \delta n_q)$ and  $\Gamma'\triangleq \min(2^{n_q}, \delta n_q+1)$ .
\end{Theorem}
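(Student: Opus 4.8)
The plan is to establish the two inequalities separately, reusing the converse machinery of Theorem~\ref{th:1} for the upper bound and the constructive machinery of Propositions~\ref{Prop:3} and~\ref{Prop:4} for the lower bound, with the essential new phenomenon being that odd-degree polynomials break the wrap-around of the associated code.

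For the \emph{upper bound}, I would first invoke Property 6) of Proposition~\ref{Prop:2}: since $\delta$ is odd and $\ell=2$, every realizable quantizer with non-repeated roots satisfies $|\mathcal{C}|\leq \min(2^{n_q},\delta n_q+1)=\Gamma'$, so the quantized output $\widehat{Y}$ takes at most $\Gamma'$ values and is induced by at most $\delta n_q=\Gamma'-1$ real thresholds on the line $\{hx+N\}$. The continuity of $P_{\widehat{Y}\mid X}(\cdot\mid x)$ in $x$, together with the limiting behavior of $P_{\widehat{Y}\mid X}(\cdot\mid x)$ as $x\to\infty$, again verifies the hypotheses of \cite[Prop.~1]{singh2009limits}, so the optimal input has bounded support; the Witsenhausen-type argument \cite[Prop.~2]{singh2009limits}, \cite{witsenhausen1980some} then forces a discrete optimizer with at most $\Gamma'$ mass points. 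Finally, since the mutual information induced by any realizable quantizer is dominated by that of a threshold quantizer on $\mathbb{R}$ with $\Gamma'$ regions and $\Gamma'-1$ thresholds (merging regions can only reduce mutual information), taking the supremum over $\mathbf{x}\in\mathbb{R}^{\Gamma'}$, $P_X\in\mathcal{P}_{\mathbf{x}}$, and $\mathbf{t}\in\mathbb{R}^{\Gamma'-1}$ upper bounds $C_Q$.

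For the \emph{lower bound}, I would exhibit a realizable quantizer attaining $\Gamma=\min(2^{n_q},\delta n_q)$ regions. Since $\ell=2$, Property 4) of Proposition~\ref{Prop:2} fixes the per-position transition count at $\kappa_j=\delta$, which is now odd, so Proposition~\ref{Prop:4} does not apply verbatim. I would therefore construct, using the balanced and locally-balanced Gray-code techniques underlying Proposition~\ref{Prop:4} \cite{bhat1996balanced,bykov2016locally}, an ordered code with these odd transition counts, satisfying Properties 1)--5), and having $\min(2^{n_q},\delta n_q)$ distinct codewords; concretely, one can start from an even-count code of degree $\delta-1$ realizing $\min(2^{n_q},(\delta-1)n_q)$ codewords and then spend the one remaining transition per position to route into previously unvisited vertices of the hypercube. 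Proposition~\ref{Prop:3} (whose polynomial construction $f_{a,j}(y)=-\prod_{k:\,t_k=j}(y-r_k)$ produces a factor of degree $\kappa_j=\delta$, hence within the admissible degree) then realizes this code with a zero threshold vector, which makes the $\Gamma$-region optimization achievable and yields the left-hand inequality.

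The main obstacle is closing the gap between $\Gamma$ and $\Gamma'$. For odd $\delta$ the leading coefficients of the $f_{a,j}$ produce opposite signs as $y\to\pm\infty$, so $\mathbf{c}_0\neq \mathbf{c}_{\delta n_q}$ is possible and the code need not wrap around, permitting the extra region counted by $\Gamma'=\delta n_q+1$. Whether this extra region is simultaneously realizable requires an ordered code with odd per-position counts and exactly $\delta n_q+1$ distinct codewords when $\delta n_q+1\leq 2^{n_q}$; the parity of the counts, the path (rather than cycle) structure of the codeword sequence, and the near-balance constraints make the existence of such a code delicate. This is precisely why the exact capacity cannot be pinned down for odd $\delta$ and the result is stated as matching lower and upper bounds rather than an equality as in Theorem~\ref{th:2}.
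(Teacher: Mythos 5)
Your proposal takes essentially the same route as the paper, whose own proof is a single sentence invoking Property 6) of Proposition~\ref{Prop:2} for the upper bound and ``the proof of Theorem~\ref{th:1}'' (the Witsenhausen-type converse of \cite{singh2009limits} plus the root-placement achievability via Propositions~\ref{Prop:3} and~\ref{Prop:4}) for the rest---precisely the decomposition you use, and you correctly identify the broken wrap-around of the associated code for odd $\delta$ as the source of the $\Gamma$ versus $\Gamma'$ gap. The one ingredient you add that the paper also leaves unproven---the existence of an ordered code with odd per-position transition counts $\delta$ attaining $\min(2^{n_q},\delta n_q)$ distinct codewords---remains a sketch in your write-up (the final $n_q$-step ascent from $\mathbf{0}$ to $\mathbf{1}$ passes through one vertex of each Hamming weight, so the even sub-code must be arranged in advance to avoid exactly such a monotone chain), but this is a gap you share with, rather than introduce relative to, the paper.
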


Lastly, for for scenarios with $\ell>2$ the following theorem characterizes the channel capacity. The proof follows from Propositions \ref{Prop:2} and \ref{Prop:4} similar to the arguments given in the proof of Theorem \ref{th:1}.

\begin{Theorem}
\label{th:4}
Consider a SISO system parametrized by $(P,h,n_q,\delta,\ell)$, where $P>0, h\in \mathbb{R}, n_q\in \mathbb{N}$, and $\ell,\delta\in \mathbb{N}$. Let $\Gamma$ be the maximum size of codes satisfying condition 1)-5) in Proposition \ref{Prop:2}.  Then, 
\begin{align}
\label{eq:th:4}
 C_Q(P,h,n_q,\delta,\ell) =\sup_{\mathbf{x}\in \mathbb{R}^{\Gamma}} \sup_{P_{X}\in \mathcal{P}_{\mathbf{x}}} \sup_{\mathbf{t}\in \mathbb{R}^{\Gamma-1}} I(X;\widehat{Y}).
\end{align}
\end{Theorem}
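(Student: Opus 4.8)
The plan is to establish Theorem~\ref{th:4} by combining the converse and achievability arguments in exactly the same pattern used for Theorems~\ref{th:1} and~\ref{th:2}, with the combinatorial input now abstracted into the quantity $\Gamma$ rather than computed explicitly. The key structural observation is that the high-SNR analysis has already reduced the problem to a purely combinatorial one: for any quantizer $Q(\cdot)$, the achievable rate at high SNR equals $\log|\mathsf{P}|=\log|\mathcal{C}|$, and by Proposition~\ref{Prop:2} every realizable associated code must satisfy properties 1)--5). Hence the maximum number of distinguishable quantizer outputs is precisely $\Gamma$, the maximum size of a code meeting conditions 1)--5). This $\Gamma$ replaces the closed forms $2n_q+1$ (Theorem~\ref{th:1}) and $\min(2^{n_q},\delta n_q)$ (Theorem~\ref{th:2}).

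First I would prove the converse, i.e. that the right-hand side of~\eqref{eq:th:4} upper bounds $C_Q$. Given any implementable quantizer, its associated code $\mathcal{C}$ necessarily obeys properties 1)--5) of Proposition~\ref{Prop:2} (assuming distinct roots, which is without loss of generality by perturbation), so $|\mathcal{C}|\le\Gamma$. Therefore the quantized output $\widehat{Y}=Q(hX+N)$ takes at most $\Gamma$ distinct values. I would then invoke the cardinality-of-support argument already used in Theorem~\ref{th:1}: since $P_{\widehat{Y}\mid X}(\cdot\mid x)$ is continuous in $x$ and converges to a point mass as $x\to\pm\infty$, the conditions of \cite[Prop.~1]{singh2009limits} hold, giving bounded optimal support, and the extension of Witsenhausen's result \cite[Prop.~2]{singh2009limits} forces the optimal input to be discrete with at most $\Gamma$ mass points. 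This yields the outer supremum over $\mathbf{x}\in\mathbb{R}^{\Gamma}$ and $P_X\in\mathcal{P}_{\mathbf{x}}$, while the supremum over $\mathbf{t}\in\mathbb{R}^{\Gamma-1}$ captures the choice of thresholds.

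For achievability I would show that every code $\mathcal{C}$ attaining $|\mathcal{C}|=\Gamma$ and satisfying properties 1)--5) can be physically realized by an implementable quantizer. This is exactly the content of Proposition~\ref{Prop:3} (Quantizer Construction): given such a $\mathcal{C}$ and any non-decreasing root sequence $r_1,\dots,r_\gamma$, one builds polynomials $f_{a,j}(y)=-\prod_{k:t_k=j}(y-r_k)$ with zero thresholds whose associated code is $\mathcal{C}$. Placing these roots so that the resulting partition separates the optimizing mass points then reproduces the target $Q(\cdot)$, matching the inner structure of~\eqref{eq:th:4}. Proposition~\ref{Prop:4} guarantees that codes of size $\Gamma$ actually exist, so the construction is nonvacuous.

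The main obstacle, and the place where this theorem genuinely generalizes Theorems~\ref{th:1} and~\ref{th:2}, is that $\Gamma$ is now defined only implicitly as the optimum of a combinatorial maximization and need not admit the clean closed forms available when $\ell=2$. I would need to verify that Proposition~\ref{Prop:3} applies to any code satisfying properties 1)--5) for general $\ell$ and $\delta$ (not merely the $\ell=2$, even-$\delta$ case in which it is stated), since the achievability direction depends on realizing an arbitrary such optimal code; if the construction lemma is genuinely restricted to $\ell=2$, the achievability argument must instead appeal directly to the root-placement recipe of Proposition~\ref{Prop:2}'s proof, treating each level-threshold $f_{a,j,k}$ separately. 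Establishing that the realizability conditions (properties 1)--5)) are simultaneously necessary (for the converse) and sufficient (for achievability) for the full range $\ell,\delta\in\mathbb{N}$ is the crux; once that equivalence is in hand, the information-theoretic optimization steps transfer verbatim from Theorem~\ref{th:1}.
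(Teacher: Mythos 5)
Your proposal takes essentially the same route as the paper, which itself gives only a one-line pointer ("follows from Propositions \ref{Prop:2} and \ref{Prop:4} similar to the arguments given in the proof of Theorem \ref{th:1}"): converse via the code properties of Proposition \ref{Prop:2} plus the Witsenhausen/\cite{singh2009limits} support-cardinality argument, achievability via quantizer realization of an optimal code. The caveat you raise --- that Propositions \ref{Prop:3} and \ref{Prop:4} are stated only for $\ell=2$ (and even $\delta$), so the realizability of an arbitrary size-$\Gamma$ code for general $\ell,\delta$ needs a separate argument --- is a genuine gap in the paper's own sketch rather than a flaw in your proposal, and your suggested fix (building each $f_{a,j}$ from the per-threshold root placements dictated by the transition sequence, one level $k\in[\ell-1]$ at a time) is the right way to close it.
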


Optimizing \eqref{eq:th:4} requires calculating $\Gamma$. The number of codes satisfying conditions 1)-5) in Proposition \ref{Prop:2} is bounded from above by ${(\ell-1)\delta n_q\choose (\ell-1)\delta, (\ell-1)\delta , \cdots, (\ell-1)\delta }$. 
For SISO systems with a few low resolution ADCs and low degree polynomials (small $\ell,n_q$ and $\delta$), one can directly calculate $\Gamma$ by optimizing over the set of such codes. 

% \begin{Proposition}
% Let $\ell=2$ and $n_q,\delta \in \mathbb{N}$. Then for any increasing sequence $r_1,r_2,\cdots, r_{\delta n_q}\in \mathbb{R}^{\delta n_q}$, there exists polynomials $f_{a,j},j\in [n_q]$ and thresholds $t(1:n_q)$ such that the ordered root sequence of $f_{a,j}(\cdot)$ is given by $r_1,r_2,\cdots,r_{\delta n_q}$, and $|\mathcal{C}|\geq  min(2^{n_q}, \delta n_q)$.
% \end{Proposition}

\section{Numerical Analysis of SISO Channel Capacity}
\label{sec:num}
In this section, we provide a numerical analysis of the capacity bounds derived in Section \ref{sec:results} and evaluate the gains due to the use of nonlinear analog components in the receiver terminal. In particular, we compute inner-bounds to the capacity expression in Section \ref{sec:results} using the extension of the Blahut-Arimoto algorithm to discrete memoryless channels with input cost constraints given in \cite{kobayashi2018joint} to find the best input distribution, and then we conduct a brute-force search over all possible symmetric threshold vectors, where a vector $\mathbf{t}$ is symmetric if $\mathbf{t}=-\mathbf{t}$ \cite{singh2009limits}. To find the mass points of $X$, we discretize the real-line using a grid  with step-size 0.1, and optimize the distribution over the resulting discrete space. Fig.  \ref{fig:delta} shows the resulting achievable rates for SNRs in the range of 0 to 30 dB for various values of $(n_q,\ell,\delta)$. It can be observed that the performance improvements due to the use of higher degree polynomials are more significant at high SNRs. Furthermore, it can be observed that the set of achievable rates only depends on $min(\ell^{n_q},(\ell-1)\delta n_q+1)$. As a result, for instance the achievable rate when $n_q=1,\ell=2,\delta=2$ is the same as that of $n_q=1,\ell=2,\delta=1$ as shown in the figure. So, in this case, using higher degree polynomials does not lead to rate improvements. On the other hand, the achievable rate for $n_q=3,\ell=2,\delta=1$ is lower than that of $n_q=3,\ell=2,\delta=2$ as shown in the figure. So, using higher degree polynomials does lead to rate improvements in this scenario.

 \begin{figure}[t]
\centering 
\vspace{0.05in}
\includegraphics[width=0.8\textwidth]{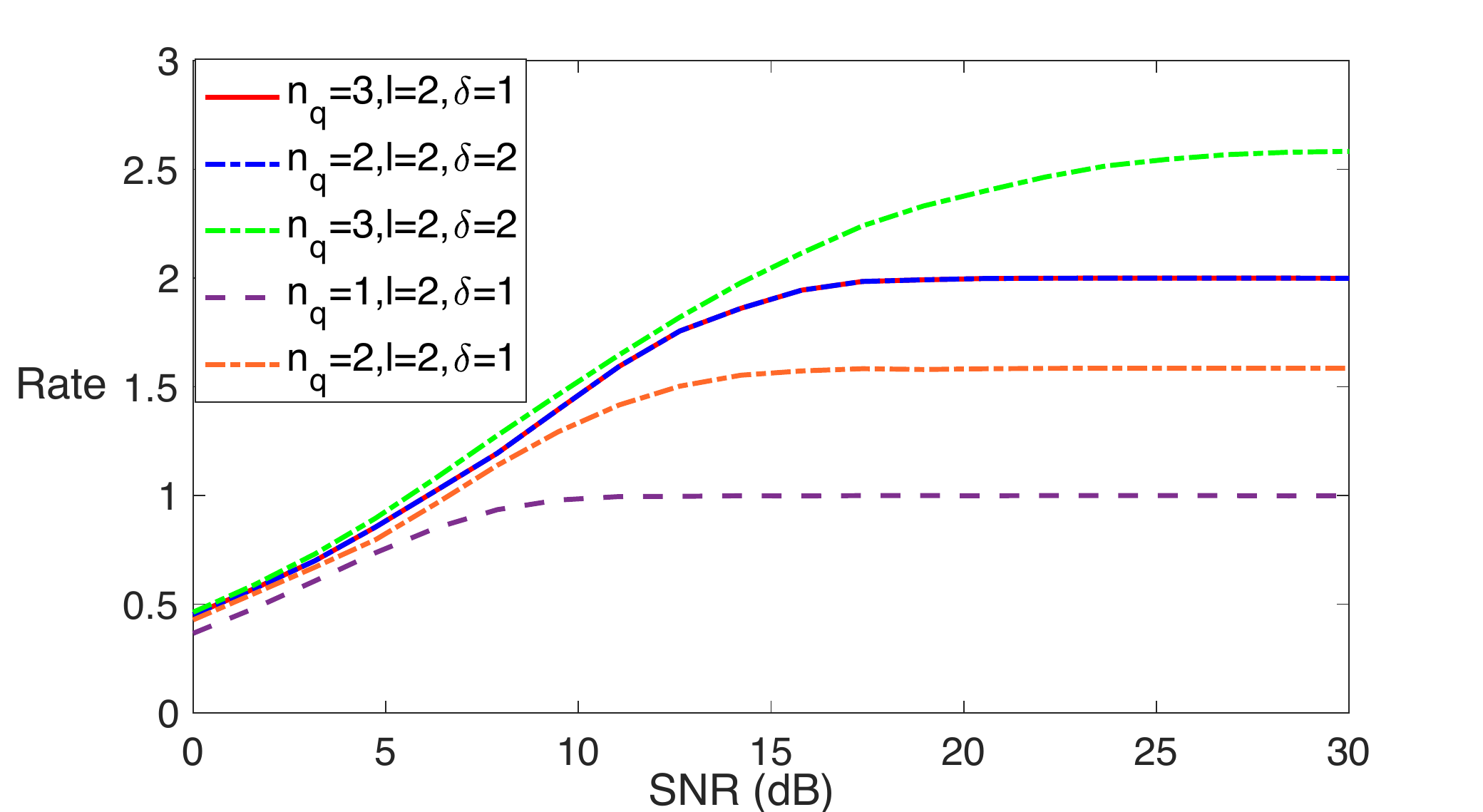}
\vspace{-.15in}
\caption{The set of achievable rates for various values of $(n_q,\ell,\delta)$. 
}
\vspace{-.25in}
\label{fig:delta}
\end{figure}
%\begin{Definition}[\textbf{Code Associated with a Quantizer}]
%Consider a $(\ell,\delta,n_q)$-quantizer characterized by $((f_{a,j}(\cdot))_{j\in [n_q]}, t(1\!\!:\!\!\ell-1,1\!\!:\!\!n_q))$ as described in Definition \ref{def:quant}. Define $f_{a,k,j}(a)\triangleq f_{a,j}(a)-t(k,j), k\in [\ell-1], j\in [n_q]$.
%Let $r_1,r_2,\cdots, r_{\ell  \delta n_q}$ be the sequence of roots of the polynomials $(f_{a,k,j}(\cdot))_{k\in [\ell-1],j\in [n_q]}$ written in an increasing order. That is:
%\begin{align*}
%  &  r_1\leq r_2\leq \cdots \leq r_{\ell \delta n_q}  
%  \\& \forall s\in [\ell \delta n_q] \quad \exists j\in [n_q], k\in [\ell-1]:\quad  f_{a,k,j}(r_s)=0. 
%\end{align*}
%Note that the sequence $(r_s, s\in [\ell \delta n_q])$ may contain repeated roots. The
%the associated code of the quantizer is defined as $\mathcal{C}= (\mathbf{c}_{1}, \mathbf{c}_2,\mathbf{c}_{\ell \delta n_q+1})$, where $\mathbf{c}_{i}= \widehat{W}(1:n_q)$, where $\widehat{W}(1:n_q)$ is the quantizer output given the channel output $Y$ in the interval $[r_{s-1}, r_{s}]$, and we have defined $r_0\triangleq -\infty$. 
%\end{Definition}
 \begin{figure}[!t]
\centering 
\includegraphics[width=0.8\textwidth]{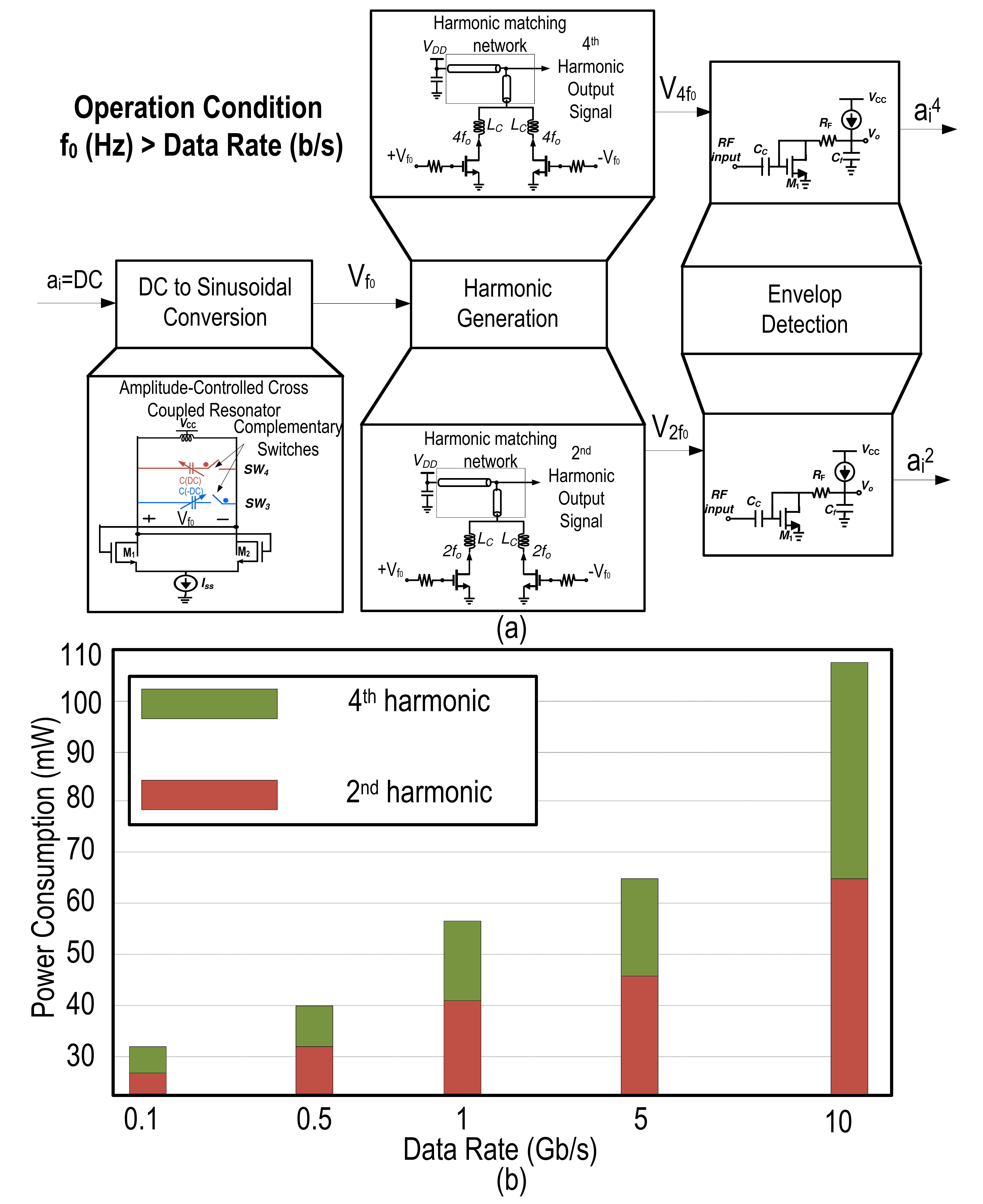}

\caption{(a) The circuit design for the generation of fourth and second order polynomials, (b) the power consumption breakdown of the circuits for generation of equal voltage amplitude (corresponding to 0 dBm power) at the second and fourth harmonics.}
\label{fig:circuit}
\end{figure}

\section{Circuit Design for  Polynomials of Degree Up to Four}
In \cite{shirani2022MIMO}, we considered a single-carrier system, where the baseband input signal is a $sinc(\cdot)$ function and showed the feasibility of implementing quadratic analog operators. The proposed operation involved two steps. In the first step, we used an integrator to transform the signal into a direct current (DC) value. In multi-carrier systems such as orthogonal frequency division multiplexing (OFDM), one can use an analog discrete Fourier transform (DFT) to produce DC signals representing the Fourier coefficients \cite{ganguly2019novel}. As a result, in this section, we assume that we are given a DC signal and our objective is to produce a polynomial function of degree up to four of the input DC value. More precisely, let the input be represented by $X_{DC}$. We wish to produce $\sum_{j=1}^4 b_j X_{DC}^j, b_j\in \mathbb{R}$. In practice, there are two methods to realize the desired polynomials: (i) DC domain nonlinear function synthesis based on the quadratic I-V characteristic of the transistor and increasing the order of polynomial by cascading circuits \cite{DCNON}, (ii) translation of DC values to sinusoidal waveforms, and then generating harmonics of these waveforms with polynomial amplitude which is dependent on the fundamental frequency amplitude. The latter is the method used in  \cite{shirani2022MIMO} to produce quadratic functions. The former has a simpler circuitry; however it can only be used to produce a specific subset of polynomials, and we do not have freedom to choose $b_j, j\in [4]$ arbitrarily. On the other hand, the latter can produce polynomials with arbitrary coefficients through efficient filtering of the undesired harmonic terms. However, it leads to higher power consumption and more complex circuitry. The implementation of this approach requires careful quantification of the non-lienar behavior of transistors which  can be accomplished using the Volterra-Weiner series representation methods \cite{AghasiTHz}.

To explain the proposed construction, let us consider the problem of producing a fourth order polynomial in the form of $f(x)=x^4+x^2$, where $x$ is the  DC input value. It is well-known, that naturally the amplitude level of the fourth harmonic is less than that of the second harmonic. So, a harmonic-centric power optimization is needed to produce the desired polynomial \cite{AghasiTHz}.  Fig. \ref{fig:circuit}(a) shows a circuit design to generate $f(x)=x^4+x^2$. In order to generate equal amplitudes at the second and fourth harmonics, the power gain of the transistors in charge of generating the fourth harmonic should be larger to compensate for the lower harmonic efficiency, leading to an increased power consumption in generating the fourth order term compared to the second order term. We have numerically calculated the power consumption values of the proposed circuit through simulations as shown in Figure \ref{fig:circuit}(b). It can be observed that the ratio of the power consumption for the generation of fourth order term compared to the second order term increases with frequency since the transistor power gain drops at higher frequencies.  These results are based on CMOS 65nm technology. The power consumption can be further improved by transitioning into smaller transistor nodes.

Theorem \ref{th:2} shows that the channel capacity depends on the number of ADCs through $\delta n_q+1$, so that the use of a quadratic analog operator instead of a linear operator ($\delta:1\to 2$) has an equivalent effect on capacity as that of doubling the number of ADCs $n_q$. This fact, along with the power consumptiont values provided in Figure \ref{fig:circuit}(b) justify the use of nonlinear analog operators. It should be noted that the power  consumption is dependent on the circuit configuration, transistor size, and passive quality factors. These simulations serve as a proof-of-concept to justify the effectiveness of the proposed receiver architecture designs. 
\label{sec:cir}

\section{Conclusion}
Application of nonlinear analog operations in SISO receivers was considered. Capacity expressions under various assumptions on the set of implementable nonlinear analog functions were derived.  For systems with one-bit ADCs, it was shown that the capacity is a function of $\delta n_q$, where, $\delta$ is the degree polynomials, and $n_q$ is the number of ADCs. This implies that doubling polynomial degree increases the channel capacity by the same amount as doubling the number of ADCs. Furthermore, circuit-level simulations, using a 65 nm Bulk CMOS technology, were provided to show the implementability of the desired nonlinear analog operators with practical power budgets.

\bibliographystyle{unsrt}
\bibliography{References}

\begin{thebibliography}{10}

\bibitem{BR}
Behzad Razavi.
\newblock {\em Principles of data conversion system design}, volume 126.
\newblock IEEE press New York, 1995.

\bibitem{zhang2018low}
Jiayi Zhang, Linglong Dai, Xu~Li, Ying Liu, and Lajos Hanzo.
\newblock On low-resolution adcs in practical 5g millimeter-wave massive mimo
  systems.
\newblock {\em IEEE Communications Magazine}, 56(7):205--211, 2018.

\bibitem{heath2016overview}
Robert~W Heath, Nuria Gonzalez-Prelcic, Sundeep Rangan, Wonil Roh, and Akbar~M
  Sayeed.
\newblock An overview of signal processing techniques for millimeter wave
  {MIMO} systems.
\newblock {\em IEEE journal of selected topics in signal processing},
  10(3):436--453, 2016.

\bibitem{nossek2006capacity}
Josef~A Nossek and Michel~T Ivrla{\v{c}}.
\newblock Capacity and coding for quantized {MIMO} systems.
\newblock In {\em Proceedings of the 2006 international conference on Wireless
  communications and mobile computing}, pages 1387--1392. ACM, 2006.

\bibitem{abbasISIT2018}
A.~Khalili, S.~Rini, L.~Barletta, E.~Erkip, and Y.~C. Eldar.
\newblock On {MIMO} channel capacity with output quantization constraints.
\newblock In {\em 2018 IEEE International Symposium on Information Theory
  (ISIT)}, pages 1355--1359, June 2018.

\bibitem{khalili2020throughput}
Abbas Khalili, Shahram Shahsavari, Farhad Shirani, Elza Erkip, and Yonina~C
  Eldar.
\newblock On throughput of millimeter wave {MIMO} systems with low resolution
  {ADC}s.
\newblock In {\em ICASSP 2020-2020 IEEE International Conference on Acoustics,
  Speech and Signal Processing (ICASSP)}, pages 5255--5259. IEEE, 2020.

\bibitem{mollen2017achievable}
Christopher Moll{\'e}n, Junil Choi, Erik~G Larsson, and Robert~W Heath.
\newblock Achievable uplink rates for massive mimo with coarse quantization.
\newblock In {\em 2017 IEEE International Conference on Acoustics, Speech and
  Signal Processing (ICASSP)}, pages 6488--6492. IEEE, 2017.

\bibitem{jacobsson2017throughput}
Sven Jacobsson, Giuseppe Durisi, Mikael Coldrey, Ulf Gustavsson, and Christoph
  Studer.
\newblock Throughput analysis of massive mimo uplink with low-resolution adcs.
\newblock {\em IEEE Transactions on Wireless Communications}, 16(6):4038--4051,
  2017.

\bibitem{witsenhausen1980some}
Hans~S. Witsenhausen.
\newblock Some aspects of convexity useful in information theory.
\newblock {\em IEEE Transactions on Information Theory}, 26(3):265--271, 1980.

\bibitem{singh2009limits}
Jaspreet Singh, Onkar Dabeer, and Upamanyu Madhow.
\newblock On the limits of communication with low-precision analog-to-digital
  conversion at the receiver.
\newblock {\em IEEE Transactions on Communications}, 57(12):3629--3639, 2009.

\bibitem{dytso2018discrete}
Alex Dytso, Mario Goldenbaum, H~Vincent Poor, and Shlomo~Shamai Shitz.
\newblock When are discrete channel inputs optimal?—optimization techniques
  and some new results.
\newblock In {\em 2018 52nd Annual Conference on Information Sciences and
  Systems (CISS)}, pages 1--6. IEEE, 2018.

\bibitem{rassouli2018gaussian}
Borzoo Rassouli, Morteza Varasteh, and Deniz G{\"u}nd{\"u}z.
\newblock Gaussian multiple access channels with one-bit quantizer at the
  receiver.
\newblock {\em Entropy}, 20(9):686, 2018.

\bibitem{huang2005characterization}
Jianyi Huang and Sean~P Meyn.
\newblock Characterization and computation of optimal distributions for channel
  coding.
\newblock {\em IEEE Transactions on Information Theory}, 51(7):2336--2351,
  2005.

\bibitem{shirani2022MIMO}
Farhad Shirani and Hamidreza Aghasi.
\newblock Mimo systems with one-bit adcs: Capacity gains using nonlinear analog
  operations.
\newblock In {\em 2022 IEEE International Symposium on Information Theory
  (ISIT)}, pages 2511--2516, 2022.

\bibitem{kobayashi2018joint}
Mari Kobayashi, Giuseppe Caire, and Gerhard Kramer.
\newblock Joint state sensing and communication: Optimal tradeoff for a
  memoryless case.
\newblock In {\em 2018 IEEE International Symposium on Information Theory
  (ISIT)}, pages 111--115. IEEE, 2018.

\bibitem{bhat1996balanced}
Girish~S Bhat and Carla~D Savage.
\newblock Balanced gray codes.
\newblock {\em the electronic journal of combinatorics}, 3(1):R25, 1996.

\bibitem{bykov2016locally}
Igor'Sergeevich Bykov.
\newblock On locally balanced gray codes.
\newblock {\em Journal of Applied and Industrial Mathematics}, 10(1):78--85,
  2016.

\bibitem{ganguly2019novel}
A~Ganguly, A~Chakraborty, and A~Banerjee.
\newblock A novel vlsi design of radix-4 dft in current mode.
\newblock {\em International Journal of Electronics}, 106(12):1845--1863, 2019.

\bibitem{DCNON}
Suraj Sindia, Virendra Singh, and Vishwani~D Agrawal.
\newblock Polynomial coefficient based dc testing of non-linear analog
  circuits.
\newblock In {\em Proceedings of the 19th ACM Great Lakes symposium on VLSI},
  pages 69--74, 2009.

\bibitem{AghasiTHz}
Hamidreza Aghasi, Andreia Cathelin, and Ehsan Afshari.
\newblock A 0.92-thz sige power radiator based on a nonlinear theory for
  harmonic generation.
\newblock {\em IEEE Journal of Solid-State Circuits}, 52(2):406--422, 2017.

\end{thebibliography}

\end{document}